\newcounter{dummy}
\newcommand\myitem[1][]{\item[#1]\refstepcounter{dummy}\def\@currentlabel{#1}}
\theoremstyle{break}
\newtheorem{theorem}{Theorem}[section]
\newtheorem{proposition}[theorem]{Proposition}
\newtheorem{corollary}[theorem]{Corollary}
\newtheorem{lemma}[theorem]{Lemma}
\newtheorem{definition}[theorem]{Definition}
\theoremstyle{nonumberbreak}
\newtheorem{proof}{Proof}
\numberwithin{equation}{section}
\DeclareMathOperator{\sgn}{sgn}
\DeclareMathOperator{\diag}{diag}
\newcommand{\Cinf}{C_{\mathrm{c}}^{\infty}}
\newcommand{\Hcal}{\mathcal{H}}
\newcommand{\Fcal}{\mathcal{F}}
\newcommand{\Acal}{\mathcal{A}}
\newcommand{\Rcal}{\mathcal{R}}
\newcommand{\Tcal}{\mathcal{T}}
\newcommand{\R}{\mathds{R}}
\newcommand{\C}{\mathds{C}}
\newcommand{\M}{\mathds{M}}
\newcommand{\WR}{\mathds{W}}
\newcommand{\e}{\mathrm{e}}
\newcommand{\xsf}{\mathsf{x}}
\newcommand{\esf}{\mathsf{e}}
\newcommand{\restr}{\!\!\upharpoonright}
\newcommand{\xvec}{\underline{x}}
\newcommand{\yvec}{\underline{y}}
\newcommand{\pvec}{\underline{p}}
\newcommand{\vel}{\mathtt{v}}
\newcommand{\dop}{\mathtt{d}}
\newcommand{\gam}{\upgamma}
\newcommand{\Teff}{\widehat{T}}
\renewcommand*{\vec}[1]{\underline{#1}}
\newcommand*\diff{\mathop{}\!\mathrm{d}}
\newcommand*\Diff[1]{\mathop{}\!\mathrm{d}^#1}
\newcommand{\normord}[1]{\, :\mathrel{\mkern2mu #1 \mkern2mu}:}
\renewcommand*{\Re}{\operatorname{Re}}
\newcommand{\mysepline}{\noindent\makebox[\linewidth]{\resizebox{0.3333\linewidth}{1pt}{$\bullet$}}\bigskip}
\newcommand{\nullmatrix}{\hstretch{0.8}{\mathds{O}}}
\renewcommand{\thefootnote}{\fnsymbol{footnote}}
\title{{\Large Probing non-equilibrium steady states of the\\ Klein-Gordon field with Unruh-DeWitt detectors}}
\author[1,2]{Albert Georg Passegger\thanks{e-mail: \texttt{albert\_georg.passegger@uni-leipzig.de}}}
\author[1,3]{Rainer Verch\thanks{e-mail: \texttt{rainer.verch@uni-leipzig.de}}}
\affil[1]{Institut f\"{u}r Theoretische Physik, Universit\"{a}t Leipzig, Leipzig, Germany}
\affil[2]{Max Planck Institute for Mathematics in the Sciences, Leipzig, Germany}
\affil[3]{CY Advanced Studies, CY Cergy Paris Universit\'{e}, Neuville-sur-Oise, France}
\date{\vspace{-5mm}}
\begin{document}

\pagenumbering{arabic}

\maketitle

\vspace{-9mm}
\renewcommand{\thefootnote}{\arabic{footnote}}

\begin{abstract}
	\noindent We calculate the transition rate of an Unruh-DeWitt detector coupled to a non-equilibrium steady state (NESS) of a free massless scalar field on four-dimensional Minkowski spacetime. 
	Bringing two semi-infinite heat baths at different temperatures into thermal contact along a surface, the NESS arises at asymptotically late times as a stationary state that has modewise thermal properties and features a heat flow between the reservoirs. 
	The detector couples linearly to the field by a monopole interaction, and it moves inertially along the axis of the NESS heat flow. 
	We contrast the transition rate with the case of a detector that is coupled to an inertial thermal equilibrium state. 
	The results illustrate that the monopole does not couple to the heat flow, causing the detector to only register kinematical effects. 
	Hence dynamical features of the NESS are hidden from this detector model.
\end{abstract}

\section{Introduction}
\label{sec:intro}

Non-equilibrium steady states (NESS) \cite{Ruelle-NESS,Jaksic-Pillet-NESS-math} describe configurations of systems that are out of thermal equilibrium, but still are stationary under the time evolution. 
NESS of quantum systems have been considered in various models and contexts, and we refer to \cite{Ruelle-entropy-cmp,Ruelle-entropy,Jaksic-Pillet-NESS,Ogata2004,Merkli-Mueck-Sigal,Merkli-Mueck-Sigal-NESS,AbouSalem2007} for a limited selection of works. 
A conceptually simple class of NESS may arise from bringing two semi-infinite heat baths (thermal reservoirs) at different temperatures into thermal contact, a model that is also referred to as a ``quench'' \cite{DLSB} (see also \cite{Bernard-Doyon2012,Bernard-Doyon2015,Hoogeveen-Doyon2015,BDLS2015,Hollands-Longo2018} in the context of conformal field theory, \cite{Spohn-Lebowitz1977,Tasaki2001,Aschbacher-Pillet2003} for similar models, and further literature referenced in these works). 
Under the energy exchange between the baths, such a system can settle at asymptotically late times in a stationary configuration with a heat (energy) flow from the higher to the lower temperature heat bath. 

A NESS of this kind has been constructed in \cite{DLSB,Hack-Verch} for initial heat baths described by thermal equilibrium (KMS) states of scalar (Klein-Gordon) quantum fields on Minkowski spacetime coupled along a spatial hypersurface. 
The NESS, which exhibits a steady heat flow between the semi-infinite heat baths, is invariant under translations, and thermal at the temperature of the left and right bath for modes that are right- and left-moving along the longitudinal axis, respectively.\medskip 

In the present paper we probe the NESS of \cite{DLSB,Hack-Verch} for a free massless scalar field on four-dimensional Minkowski spacetime using a two-level Unruh-DeWitt monopole detector \cite{Unruh1976,DeWitt,Birrell-Davies,Unruh-Wald1984,Takagi1986} that moves with constant velocity along the longitudinal axis of the setup. 
The Unruh-DeWitt detector is an important ``particle detector'' model that is frequently used as a probe for quantum fields in studies on, for instance, the Unruh effect and its generalization along different stationary trajectories (see, e.g., \cite{Crispino-Higuchi-Matsas,Good-JA-Moustos-Temirkhan2020,Biermann-et-al2020} and references therein). 
We derive the asymptotic transition rate (number of transitions per unit proper time) of the detector in first-order time-dependent perturbation theory under the interaction with the NESS and contrast it with the transition rates of inertially moving detectors coupled to inertial thermal equilibrium states characterized by the KMS (Kubo-Martin-Schwinger) condition \cite{Kubo1957,Martin-Schwinger1959,HHW1967,Haag1996,Bratteli-Robinson2}.\medskip 

The topic is motivated by our recent work \cite{PV-disj}, where we discussed the late time behavior of a coupled system involving an Unruh-DeWitt detector that moves with constant velocity relative to an inertial KMS state of a massless scalar field. 
The results of \cite{PV-disj} indicate that the non-thermalization of the system relative to the inertial rest frame of the detector can be traced back to the disjointness of KMS states in different inertial frames (i.e.\ two distinct KMS states in different inertial frames cannot be generated from one another via quasi-local operations). 
That inertially moving detectors in a heat bath will not approach thermal equilibrium is supported by the transition rate calculated in \cite{Costa-Matsas1995}, which shows that the detector is affected by the motion-induced Doppler shift of the bath's thermal radiation relative to the detector's rest frame (see also \cite{Costa-Matsas-background1995,Landsberg-Matsas1996,Landsberg-Matsas2004}). 
The form of the asymptotic density matrix of the detector calculated in \cite{Papadatos-Anastopoulos2020}, and the impossibility for a state to satisfy the KMS condition relative to two different inertial frames \cite{Sewell2008,Sewell-rep2009}, further corroborate this picture. 

Based on these observations one might think that the anisotropy and apparent heat flow of a thermal reservoir from the perspective of the moving detector's rest frame is modeled by a NESS. 
However, it is important to note that the NESS we consider here does not satisfy the KMS condition with respect to any inertial reference frame (except for the free massless scalar field in two spacetime dimensions \cite{DLSB}; see also \cite[Sec.\ 3.2]{Hack-Verch}). 
Still, depending on the model, the response of a detector in terms of its transition rate could be comparable for a moving detector in a heat bath and in a NESS. 
We thus opt for an investigation in a detector-dependent setup, focusing on the question of what an Unruh-DeWitt monopole detector can ``see'' from a NESS in its transition rate.\medskip 

In Section \ref{sec:setup} the basic setup is introduced. 
We review the transition rate of an Unruh-DeWitt detector coupled to a massless Klein-Gordon field and briefly introduce the NESS of the field according to \cite{Hack-Verch}. 
In Section \ref{sec:transition-rates} we calculate and discuss the transition rates of a detector at rest and moving inertially along the longitudinal axis of the NESS as well as relative to a KMS heat bath under a monopole interaction. 
Some aspects of the detector's effective detailed balance temperatures in these cases are discussed in Section \ref{sec:detailed-balance}, and we define the comoving frame of the NESS in Section \ref{sec:case-NESS-comoving}. 
Our work represents a first step in the study of NESS from the perspective of detector systems, a topic which, to the best of our knowledge, has not received much attention yet. 
Section \ref{sec:conclusions-outlook} summarizes our findings and outlines open problems that could be picked up in the future. 
Additional material (mostly consisting of previously established results) is collected in four appendices, one of them (Appendix \ref{appendix:cm}) providing a detailed derivation of the transition rate of a detector moving with constant velocity through a massless scalar field heat bath, which has been presented in \cite{Costa-Matsas-background1995,Costa-Matsas1995}.

\paragraph{Conventions and remarks} We use the ``mostly minus'' metric signature $(+,-,-,-)$, and physical units where the vacuum speed of light, reduced Planck constant, and Boltzmann constant are set to $1$. 
The plots were generated using the \LaTeX\ package PGFPlots (\url{https://ctan.org/pkg/pgfplots}). 
The surface (mesh) plots are colored (with interpolation shading) as a visual aid according to the value of the function ($z$-axis; ``lower = blue'', ``higher = yellow'').

\section{Preliminaries and setup}
\label{sec:setup}

\subsection{Detector and quantum field}
\label{sec:setup-field-detector}

A two-level Unruh-DeWitt detector \cite{Unruh1976,DeWitt,Birrell-Davies,Unruh-Wald1984,Takagi1986}, henceforth simply called a \emph{detector}, is a quantum system with Hilbert space $\C^2$ and Hamiltonian $H_D = \diag(E,0)$ with respect to orthonormal basis vectors $|E\rangle$ and $|0\rangle$ of eigenvalues $E\in\R\setminus\{0\}$ and $0$, respectively, where $E$ represents the energy gap of the system. 
Depending on the sign of $E$ the vector $|0\rangle$ represents the ground (for $E>0$) or excited state (for $E<0$), and vice versa for $|E\rangle$. 
A transition from $|0\rangle$ to $|E\rangle$ is thus interpreted either as an excitation ($E>0$) or a decay (de-excitation, $E<0$) of the detector. \medskip

The \emph{quantum field} under consideration is a free massless real scalar (Klein-Gordon) field on four-dimensional Minkowski spacetime $\M$, and we essentially follow the setup described in \cite{Hack-Verch}, which is based on the algebraic approach to quantum field theory \cite{Haag1996,Fewster-Rejzner-AQFT}. 
Points in Minkowski spacetime $\M\cong\R^4$ relative to coordinate axes of an inertial reference frame will be denoted by $x \equiv (x^0 , \xvec) \equiv (x^0 , x^1 , x^2 , x^3)$. 
The field equation is given by $(\partial_0^2 - \sum_{j=1}^3 \partial_j^2)\phi=0$, where $\partial_\mu := \frac{\partial}{\partial x^\mu}$ for $\mu\in\{0,1,2,3\}$. 
The quantum field theory can be described by a unital $^\ast$-algebra $\Acal$ of observables, generated by formal symbols $\{\mathds{1},\phi(f) : f\in\Cinf(\M)\}$ subject to the usual relations implementing linearity, hermiticity, canonical commutation relations, and the field equation; see, e.g., \cite{Khavkine-Moretti,Fewster-Rejzner-AQFT} for details. 
The generators $\phi(f)$ are interpreted as ``smeared field operators'', which motivates the common integral kernel notation $\phi(f)=\int_{\M} \phi(x)f(x)\diff x$. 
On $\Acal$ the \emph{time evolution} along the time direction determined by the $x^0$-axis is given by the one-parameter $^\ast$-automorphism group $\alpha=\{\alpha_t\}_{t\in\R}$ with
\begin{gather}
	\label{eq:alpha}
	\alpha_t (\phi(x^0,\xvec)) = \phi(x^0 + t,\xvec) \, .
\end{gather}
Smeared with a compactly supported smooth function $f\in\Cinf(\M)$ this means that $\alpha_t (\phi(f))=\phi(f\circ T_{-t})$ for the time translation isometries $T_t : (x^0,\xvec)\mapsto(x^0 + t,\xvec)$ on $\M$.\medskip

Let $\omega : \Acal\to\C$ be a state, i.e.\ a linear functional on $\Acal$ that is positive ($\omega(A^\ast A) \geq 0$ for all $A\in\Acal$), normalized ($\omega(\mathds{1})=1$), and continuous in the topology induced by the test function topology. 
There exists a $^\ast$-representation $\pi_\omega$ of $\Acal$ by linear operators on a dense subspace $\mathcal{D}_\omega$ of a Hilbert space $\Hcal_\omega$ and a unit vector $\Omega_\omega \in \mathcal{D}_\omega$, such that $\omega(A)=\langle\Omega_\omega , \pi_\omega (A)\Omega_\omega\rangle$ for all $A\in\Acal$ and $\pi_\omega (\Acal)\Omega_\omega = \mathcal{D}_\omega$. 
The data $(\pi_\omega , \Hcal_\omega , \mathcal{D}_\omega , \Omega_\omega)$ is called the GNS representation of $\omega$ and is unique up to unitary equivalence (see, e.g., \cite{Khavkine-Moretti,Fewster-Rejzner-AQFT}). 
The quantum field is represented by an operator-valued distribution $\Cinf(\M)\ni f\mapsto\Phi(f)$ with operators $\Phi(f):=\pi_\omega (\phi(f))=\int_{\M} \Phi(x)f(x)\diff x$ on $\mathcal{D}_\omega$, which is used to formally define $\Phi(x)$ for $x\in\M$. 
For instance, in the GNS representation of the Minkowski vacuum, which is the ground state with respect to every inertial time translation, we have $\Phi(x)=(2\pi)^{-3/2} \int_{\R^3} (2\lVert\pvec\rVert)^{-1/2} (\e^{ip\cdot x} a^\ast (\pvec) + \e^{-ip\cdot x} a (\pvec)) \Diff3 \pvec$ for $p\cdot x=\lVert\pvec\rVert x^0 - \pvec\cdot\vec{x}$ and annihilation and creation operators $a,a^\ast$ on the bosonic Fock space over the one-particle space $L^2 (\R^3)$, obeying the commutation relations $[a (\pvec),a^\ast (\pvec')]=\delta(\pvec-\pvec')$. 
For details on these topics we refer to \cite{Khavkine-Moretti,Fewster-Rejzner-AQFT}.\medskip 

An important class of states is given by \emph{quasi-free Hadamard states}. 
Quasi-free (also called ``Gaussian'') states $\omega$ are completely determined by their distributional two-point (positive-frequency Wightman) correlation function $W_\omega$,
\begin{gather}
	\label{eq:two-point}
	W_\omega (x,y) := \omega(\phi(x)\phi(y)) \equiv \langle\Omega_\omega , \Phi(x)\Phi(y) \Omega_\omega \rangle \, , \quad x,y\in\M \, ,
\end{gather}
in that their even $n$-point functions are sums of products of two-point functions, and their odd $n$-point functions vanish \cite{Kay-Wald,Khavkine-Moretti,Fewster-Rejzner-AQFT}. 
Hadamard states are states for which $W_\omega$ is of Hadamard form, meaning that it has a specific universal (state-independent) singular structure representing a short-distance behavior that approximates the vacuum state \cite{Kay-Wald,Radzikowski1996,Wald-book,Khavkine-Moretti}. 
For our purposes it suffices to characterize a Hadamard state $\omega$ by the property that 
\begin{gather*}
	\widetilde{W}_\omega := W_\omega-W_{\mathrm{vac}}
\end{gather*}
is a smooth function on $\M \times \M$ (see \cite{Khavkine-Moretti} and references therein), where $W_{\mathrm{vac}}$ is the two-point function of the Minkowski vacuum state given by
\begin{gather}
	\label{eq:vac-two-point}
	W_{\mathrm{vac}}(x,y) = \frac{1}{(2\pi)^3} \int\limits_{\R^3} \frac{1}{2\lVert\pvec\rVert} \e^{-i\lVert\pvec\rVert(x^0 - y^0)} \e^{i\pvec\cdot(\xvec-\yvec)} \Diff3\pvec
\end{gather}
in the sense of bidistributions (see, e.g., \cite{Fewster-Rejzner-AQFT,Khavkine-Moretti}). 
The smooth part of the two-point function of a Hadamard state will always be denoted with a tilde, as defined here. 

We will study quasi-free Hadamard states $\omega$ that are $\alpha$-invariant, which means that they are stationary with respect to the inertial time evolution in the sense that $\omega\circ\alpha_t = \omega$ for all $t\in\R$. 
In that case the time evolution is generated by a Hamiltonian $H_\omega$ on the GNS space $\Hcal_\omega$, that is, $\frac{\mathrm{d}}{\diff t} \pi_\omega (\alpha_t (A)) \restr_{t=0} \, = i[H_\omega,\pi_\omega (A)]$ for $A\in\Acal$. 
The domain of $H_\omega$ contains the dense subspace $\pi_\omega (\Acal)\Omega_\omega$, which is invariant under $H_\omega$. 
The $\alpha$-invariant quasi-free Hadamard states considered in this work are the NESS $\omega_N$ that will be introduced in Section \ref{sec:setup-ness} \cite{Hack-Verch}, and the \emph{KMS states} $\omega_\beta$ on $\Acal$ for KMS parameter $\beta>0$ with respect to $\alpha$ (which satisfy the Hadamard condition by \cite{Sahlmann-Verch}), representing inertial thermal equilibrium states of the field at inverse temperature $\beta$ \cite{HHW1967,Haag1996,Bratteli-Robinson2} (see also \cite[Sec.\ 2.1]{Hack-Verch}). 
The two-point function $W_\beta := W_{\omega_\beta}$ of $\omega_\beta$ is given by
\begin{gather}
	W_\beta (x,y) = \frac{1}{(2\pi)^3} \int\limits_{\R^3} \frac{1}{2\lVert\pvec\rVert} \left( \frac{\e^{i\lVert\pvec\rVert(x^0 - y^0)} \e^{-i\pvec\cdot(\xvec-\yvec)}}{\e^{\beta\lVert\pvec\rVert} - 1} - \frac{\e^{-i\lVert\pvec\rVert(x^0 - y^0)} \e^{i\pvec\cdot(\xvec-\yvec)}}{\e^{-\beta\lVert\pvec\rVert} - 1} \right) \, \Diff3\pvec
	\label{eq:thermal-two-point}
\end{gather}
in the sense of bidistributions (see, e.g., \cite{Fewster-Rejzner-AQFT}, and Appendix \ref{appendix:cm} for an equivalent representation). 
The vacuum two-point function in Eq.\ \eqref{eq:vac-two-point} is recovered in the limit $\beta\to\infty$.

\subsection{Coupling and transition rate}
\label{sec:setup-transition-rate}

Let $\omega$ be a quasi-free Hadamard state on the $^\ast$-algebra $\Acal$ of the field, which is represented as operators $\Phi(x)$ on a dense subspace of the GNS Hilbert space $\Hcal_\omega$ (see Section \ref{sec:setup-field-detector}), and $\xsf:\R\to\M$, $\tau\mapsto \xsf(\tau)$ a smooth timelike curve describing the worldline of the detector parametrized in proper time $\tau\in\R$. 
The free time evolution of the combined detector-field system is assumed to be generated by a Hamiltonian $H_D \otimes \mathds{1} + \mathds{1} \otimes H_\omega$ on $\C^2 \otimes \Hcal_\omega$ with respect to proper time. 
We let the detector-field system evolve under a (point-like) \emph{monopole interaction} linear in the field, implemented in the interaction picture by adding a Hamiltonian of the form (see, e.g., \cite{DeWitt,Birrell-Davies,Louko-Satz,Waiting-for-Unruh}) 
\begin{gather*}
	\lambda \chi(\tau) \mu(\tau) \otimes \Phi(\xsf(\tau)) \, .
\end{gather*}
Here, $\lambda\in\R$ is the coupling parameter, and $\chi$ is a real-valued, non-negative, smooth function, called the switching function, whose support determines the times during which the coupling between detector and field is active. 
The operator $\mu(\tau)=\e^{iH_D \tau} \mu(0) \e^{-iH_D \tau}$ is the monopole moment operator of the detector at time $\tau$ for some self-adjoint bounded operator $\mu(0)$ on $\C^2$ such that $|\langle E|\mu(0)|0\rangle|\neq 0$.\medskip 

Assume that initially the detector is prepared in the state represented by $|0\rangle$, and the field in the state $\omega$. 
The probability for the detector to undergo a transition to the state represented by $|E\rangle$ (with the field taking any state) under the interaction is $\lambda^2 |\langle E|\mu(0)|0\rangle|^2 \, \Fcal_{\omega,\chi}(E)$ in first-order perturbation theory in $\lambda$ \cite{Unruh1976,DeWitt,Birrell-Davies,Louko-Satz,Waiting-for-Unruh}, where $\Fcal_{\omega,\chi}$ is the \emph{response function} defined by (for suitable $\chi$, e.g.\ smooth compactly supported or Schwartz function) 
\begin{gather*}
	\Fcal_{\omega,\chi} (E) := \int\limits_{\R} \int\limits_{\R} \e^{-iE(\tau-\tau')} \chi(\tau) \chi(\tau') W_\omega (\xsf(\tau),\xsf(\tau')) \, \diff \tau \diff \tau'
\end{gather*}
for the two-point function $W_\omega$ of the state $\omega$ (Eq.\ \eqref{eq:two-point}). 
Apart from prescribed internal quantities of the detector system, the transition probability is therefore determined by the response function. 
Since $W_\omega (x,y)=\overline{W_\omega (y,x)}$, the response function is real-valued. 
Due to the bidistributional nature of the two-point function, closed form expressions of $W_\omega (x,y)$ usually need to be regularized by an $i\epsilon$-prescription. 
The Hadamard property of $\omega$ entails a good control over the singular part of $W_\omega$ \cite{Kay-Wald,Radzikowski1996,Khavkine-Moretti}. 
As discussed in \cite{Louko-Satz} one can specify functions $W_\omega^{(\epsilon)}$, $\epsilon>0$, with the property that smearing them with test functions and taking $\epsilon\to 0_+$ results in $W_\omega$; accordingly, the integration in the definition of the response function is understood as $\lim_{\epsilon\to 0_+} \int_{\R} \int_{\R} \e^{-iE(\tau-\tau')} \chi(\tau) \chi(\tau') W_\omega^{(\epsilon)} (\xsf(\tau),\xsf(\tau')) \, \diff \tau \diff \tau'$.\medskip 

For the worldlines $\tau\mapsto\xsf(\tau)$ and states $\omega$ we consider in this work, the pullback of the two-point function along $\xsf$ that appears in the response function only depends on the proper time difference between two points on the worldline, i.e.\ $W_\omega (\xsf(\tau),\xsf(\tau')) = W_\omega(\xsf(\tau-\tau'),\xsf(0))$ for $\tau,\tau'\in\R$. 
Let us define the distribution $w_\omega$ by
\begin{gather*}
	w_\omega (s):= W_\omega(\xsf(s),\xsf(0)) \quad \text{for}\ s\in\R \, .
\end{gather*}
In the limit of infinitely long interaction time, the number of detector transitions per unit proper time, the so-called transition rate, is given by the (distributional) Fourier transform of $w_\omega$ (see, e.g., \cite{Birrell-Davies,Schlicht2004,Louko-Satz,Waiting-for-Unruh,Bunney-Parry-Perche-Louko2024}). 
Omitting the technical details one can see this as follows (see \cite{Waiting-for-Unruh}, and the appendix of \cite{Bunney-Parry-Perche-Louko2024} for a general derivation that also discusses the various conventions appearing in the literature): 
Take a switching function of the form $\chi_\theta (\tau) = \chi(\tau/\theta)$ for some non-negative smooth function $\chi$ with $\chi(0)=1$ and characteristic time scale given by the parameter $\theta>0$ (this is referred to as ``adiabatic scaling'' in \cite{Waiting-for-Unruh}). 
For example, following \cite{Bunney-Parry-Perche-Louko2024}, one can consider a Gaussian profile $\chi_\theta (\tau)=\e^{-\frac{\pi\tau^2}{2\theta^2}}$, which converges pointwise to the constant $1$ as $\theta\to\infty$ (eternal constant coupling, no switching). 
The corresponding response function then becomes (after setting $\zeta:=\tau+\tau'$, $s:=\tau-\tau'$)
\begin{gather*}
	\Fcal_{\omega,\chi_\theta}(E) = \frac{1}{2} \int\limits_{\R} \int\limits_{\R} \e^{-\frac{\pi(\zeta^2 + s^2)}{4\theta^2}} \e^{-iEs} \, w_\omega (s) \, \diff\zeta \diff s = \theta \int\limits_{\R} \e^{-iEs} \, \e^{-\frac{\pi s^2}{4\theta^2}} \, w_\omega (s) \diff s \, .
\end{gather*}
The limit
\begin{gather}
	\label{eq:transition-rate}
	\Rcal_\omega (E) := \lim\limits_{\theta\to\infty} \frac{1}{\theta} \Fcal_{\omega,\chi_\theta}(E) = \int\limits_{\R} \e^{-iEs} w_\omega (s) \diff s
\end{gather}
represents the \emph{transition rate} of the detector (see \cite{Birrell-Davies,Schlicht2004,Louko-Satz,Waiting-for-Unruh,Bunney-Parry-Perche-Louko2024}). 
We will exclusively focus on transition rates of detectors in the stated sense, i.e., for infinitely long interaction and sufficiently small (constant) coupling parameter conforming to first-order perturbation theory. 
A formula for the (in general time-dependent) transition rate at some finite time during the interaction, which takes properties of the switching function into account and applies to possibly non-stationary detector worldlines, has been given in \cite{Schlicht2004} and addressed in \cite{Louko-Satz} for theories on curved spacetimes (see also \cite{Louko-Satz-click,Satz2007} for related discussions).

\subsection{NESS}
\label{sec:setup-ness}

Non-equilibrium steady states (NESS) \cite{Ruelle-NESS,Jaksic-Pillet-NESS-math} for linear and interacting Klein-Gordon models have been constructed in \cite{Hack-Verch}, generalizing previous results due to \cite{DLSB}. 
They are obtained as late time limits of two semi-infinite inertial heat baths at different inverse temperatures $\beta_L , \beta_R > 0$ that are brought into thermal contact, as will be outlined below. 
From \cite[Thm.\ 3.2 (1)]{Hack-Verch} (see also \cite{DLSB}) we gather the following definition of the NESS considered in the setup of the present paper. 
\begin{definition}[NESS of massless Klein-Gordon field]
	\label{def:ness}
	Let $\beta_L , \beta_R > 0$. 
	The NESS $\omega_N$ on the $^\ast$-algebra $\Acal$ of the free massless scalar field on $\M\cong\R^4$ is a quasi-free state (depending on $\beta_L , \beta_R$) with two-point function $W_N := W_{\omega_N}$ (Eq.\ \eqref{eq:two-point}) given by
	\begin{gather}
		\label{eq:ness-two-point}
		W_N (x,y) = \frac{1}{(2\pi)^3} \int\limits_{\R^3} \frac{1}{2\lVert\pvec\rVert} \e^{i\pvec\cdot(\xvec-\yvec)} \left( \frac{\e^{i\lVert\pvec\rVert(x^0 - y^0)}}{\e^{\upbeta(-p_1)\lVert\pvec\rVert} - 1} - \frac{\e^{-i\lVert\pvec\rVert(x^0 - y^0)}}{\e^{-\upbeta(p_1)\lVert\pvec\rVert} - 1} \right) \, \Diff3\pvec
	\end{gather}
	for $\pvec=(p_1,p_2,p_3)\in\R^3$, where 
	\begin{gather}
		\label{eq:beta-ness}
		\upbeta(p_1) := \Theta(p_1) \beta_L + \Theta(-p_1)\beta_R
	\end{gather}
	for the Heaviside function $\Theta$ (indicator function of $(0,\infty)$). 
\end{definition}
A discussion that covers the technical details and physical interpretation may be found in \cite{Hack-Verch,DLSB}. 
Here, we only give a rough description of the construction in \cite{Hack-Verch}, and highlight some fundamental properties.\medskip

Let $\beta_L > 0$ and $\beta_R > 0$ be the inverse temperatures of two semi-infinite heat baths of the quantum field inside $x^1 < 0$ (``left'') and $x^1 > 0$ (``right''), respectively, with respect to the time evolution $\alpha$ along the $x^0$-axis (Eq.\ \eqref{eq:alpha}). 
We assume these heat baths to be in contact with each other in a neighborhood of the spatial surface $\{\xvec\in\R^3 : x^1 = 0\}$, with a smoothed-out transition region $\{\xvec\in\R^3 : |x^1|\leq\varepsilon \}$ for some $\varepsilon>0$. 
Together with an intermediate $\beta'$-KMS state on the transition region for some $\beta' \geq \max(\beta_L , \beta_R)$, the $\beta_L$- and $\beta_R$-KMS states of the heat baths are glued together by a construction using partitions of unity, which is presented in detail in \cite[Sec.\ 2.2]{Hack-Verch}. 
This results in a quasi-free Hadamard state $\omega_G$ by \cite[Thm.\ 3.1]{Hack-Verch}. 
(For simplicity, the chemical potentials of the initial semi-infinite heat baths and of the transition region between them are set to $0$, so our situation falls into the third case of \cite[(19)]{Hack-Verch}.) 
If $\WR := \{x\in\M : |x^0|<x^1 \}$ is the right wedge and $\esf_1$ is the spacelike unit vector along the $x^1$-axis, then $\omega_G$ restricts to a $\beta_L$-KMS state on the local observable algebra $\Acal (-\WR-\varepsilon \esf_1)$ and to a $\beta_R$-KMS state on $\Acal (\WR+\varepsilon \esf_1)$, with respect to time evolution $\alpha$. 
By \cite[Thm.\ 3.2 (1)]{Hack-Verch} the glued initial state evolves into the $\alpha$-invariant, quasi-free NESS
\begin{gather*}
	\omega_N = \lim_{t\to\infty} \omega_G \circ \alpha_t
\end{gather*}
on $\Acal$, completely determined by the two-point function $W_N$ specified in Eq.\ \eqref{eq:ness-two-point}. 
The NESS $\omega_N$ resulting from this construction does not depend on the KMS parameter $\beta'$ of the transition region (and its chemical potential) as long as the conditions laid out in \cite[Sec.\ 3.1]{Hack-Verch} are fulfilled (see \cite[Thm.\ 3.2 (1)]{Hack-Verch}). 
As noted in \cite{Hack-Verch}, the NESS agrees with the one constructed in \cite{DLSB}, which assumed the initial heat baths to be in sharp contact. 
The smooth transition region in the initial configuration of the system described above guarantees an initial state that satisfies the Hadamard condition everywhere, which enables the construction of NESS in interacting Klein-Gordon models \cite{Hack-Verch}.\medskip 

Some fundamental properties of the NESS $\omega_N$ can be inferred from the form of the two-point function $W_N$ in Eq.\ \eqref{eq:ness-two-point}. 
The state $\omega_N$ is translation-invariant in time and spatial directions, and it is a Hadamard state, i.e.\ $\widetilde{W}_N := W_N - W_{\mathrm{vac}}$ is a smooth function on $\M\times\M$ \cite{Hack-Verch}, given by (using Eq.\ \eqref{eq:vac-two-point}) 
\begin{gather}
	\label{eq:ness-two-point-reg}
	\widetilde{W}_N (x,y) = \frac{1}{(2\pi)^3} \int\limits_{\R^3} \frac{1}{2\lVert\pvec\rVert} \e^{i\pvec\cdot(\xvec-\yvec)} \left( \frac{\e^{i\lVert\pvec\rVert(x^0 - y^0)}}{\e^{\upbeta(-p_1)\lVert\pvec\rVert} - 1} + \frac{\e^{-i\lVert\pvec\rVert(x^0 - y^0)}}{\e^{\upbeta(p_1)\lVert\pvec\rVert} - 1} \right) \, \Diff3\pvec \, .
\end{gather}
Since $W_N$ coincides with the $\beta$-KMS two-point function $W_\beta$ (Eq.\ \eqref{eq:thermal-two-point}) if $\beta_L = \beta_R =: \beta$, we have to assume that $\beta_L \neq \beta_R$ to obtain a proper NESS. 
The two-point function then indicates that right-moving ($p_1 > 0$) and left-moving modes ($p_1 < 0$) along the $x^1$-axis are thermal at temperature $\beta_L^{-1}$ and $\beta_R^{-1}$, respectively \cite{DLSB,Hack-Verch}. 
This observation can be expressed rigorously as a ``modewise'' KMS property of the NESS (see \cite[Sec.\ 3.3]{Hack-Verch}). 
Moreover, the $(01)$ component of the expected stress-energy tensor is non-vanishing \cite{DLSB} (see Lemma \ref{lem:set-ness-components}). 
Hence the NESS features a heat (or energy) flow that travels from $x^1 = -\infty$ to $x^1 = +\infty$ for $\beta_L^{-1} > \beta_R^{-1}$ (and reversed for $\beta_L^{-1} < \beta_R^{-1}$).

\section{Transition rates in (non-)equilibrium states}
\label{sec:transition-rates}

We fix an inertial reference frame $I$ (the laboratory frame) corresponding to coordinates $(x^0 , x^1 , x^2 , x^3)$ on $\M\cong\R^4$. 
The inertial time evolution in this frame is implemented on the observable algebra $\Acal$ of the free massless scalar field by the $^\ast$-automorphism group $\alpha$ defined in Eq.\ \eqref{eq:alpha}. 
Let $I_{\vel}$ be the inertial reference frame that moves relative to $I\equiv I_0$ with constant velocity $\vel\in(-1,1)$ along the $x^1$-axis (which is the longitudinal axis of the NESS specified in Section \ref{sec:setup-ness}). 
In proper time $\tau\in\R$ the worldline of the detector at rest in $I_{\vel}$ (passing through the origin at $\tau=0$) is, expressed in $I$, 
\begin{gather}
	\label{eq:inertial-worldline}
	\tau\mapsto\xsf_{\vel} (\tau)=\gam (\tau , \vel \tau ,0,0) \, , \quad \gam:=\frac{1}{\sqrt{1-\vel^2}} \, .
\end{gather} 
We also define
\begin{gather*}
	\dop_\pm := \gam(1\pm\vel) = \sqrt{\frac{1\pm\vel}{1\mp\vel}} \, ,
\end{gather*}
which represent the Doppler factors. 

We consider the following two classes of scenarios for the detector-field system. 

\begin{enumerate}[align=left]
	\myitem[\textsf{(KMS-$\vel$)}] \label{item:KMS} The detector is stationary in $I_{\vel}$ and is coupled to a thermal equilibrium state relative to $I$ (the KMS state $\omega_\beta$ with respect to $\alpha$ at inverse temperature $\beta>0$, with two-point function $W_\beta$ given by Eq.\ \eqref{eq:thermal-two-point}).
	\myitem[\textsf{(NESS-$\vel$)}] \label{item:NESS} The detector is stationary in $I_{\vel}$ and is coupled to a NESS in $I$ (the NESS $\omega_N$ with respect to $\alpha$ and initial semi-infinite heat baths at inverse temperatures $\beta_L , \beta_R > 0$, with two-point function $W_N$ given by Eq.\ \eqref{eq:ness-two-point}).
\end{enumerate}

In order to calculate the transition rates in these scenarios it will be useful to split them into regular and distributional parts (as has been done in, e.g., \cite{Biermann-et-al2020,Bunney-Parry-Perche-Louko2024}). 
For the inertial worldlines we can exploit the Hadamard property of the initial field states by extracting the contribution of the Minkowski vacuum state (the ground state of the massless scalar field relative to every inertial frame), which has the two-point function $W_{\mathrm{vac}}$ given by Eq.\ \eqref{eq:vac-two-point}. 
A common expression for the vacuum two-point function along a worldline $\tau\mapsto\xsf(\tau)$ in terms of a formal, regularized integral kernel would be the $i\epsilon$-prescription $W_{\mathrm{vac}} (\xsf(\tau),\xsf(\tau')) = -\frac{1}{4\pi^2} \lim_{\epsilon\to 0_+} 1/\bigl((x^0 (\tau) - x^0 (\tau') - i\epsilon)^2 - \lVert\xvec(\tau)-\xvec(\tau')\rVert^2 \bigr)$ (see \cite[Eq.\ (3.59)]{Birrell-Davies}), in which it is implied that one has to integrate against test functions before taking the limit $\epsilon\to 0$ from above. 
An alternative, manifestly Lorentz-invariant regularization is $W_{\mathrm{vac}} (\xsf(\tau),\xsf(\tau')) = -\frac{1}{4\pi^2} \lim_{\epsilon\to 0_+} 1/\bigl( \xsf(\tau)-\xsf(\tau') -i\epsilon(\dot{\xsf}(\tau)+\dot{\xsf}(\tau')) \bigr)^2$ for the four-velocity $\dot{\xsf}$ \cite{Schlicht2004} (note that the global minus sign does not appear in \cite{Schlicht2004} due to the inverted choice of metric signature). 
Thus
\begin{gather*}
	W_{\mathrm{vac}} (\xsf_{\vel} (s), 0) = W_{\mathrm{vac}} (\xsf_0 (s), 0) = -\frac{1}{4\pi^2} \lim\limits_{\epsilon\to 0_+} \frac{1}{(s-i\epsilon)^2}
\end{gather*}
for every velocity $\vel$, and the vacuum transition rate $\Rcal_{\mathrm{vac}}$ of an inertially moving detector is shown to be
\begin{gather}
	\label{eq:transition-rate-vac}
	\Rcal_{\mathrm{vac}} (E) := \int\limits_{\R} \e^{-iEs} W_{\mathrm{vac}} (\xsf_{\vel} (s), 0) \diff s = -\frac{E}{2\pi} \Theta(-E)
\end{gather}
for $E\in\R\setminus\{0\}$ by contour integration (see, e.g., \cite{Schlicht2004}, and also \cite{Biermann-et-al2020,Bunney-Parry-Perche-Louko2024}). 
As expected, the transition rate vanishes for $E>0$, i.e.\ there is no detector excitation in the vacuum.

\subsection{Detector moving inertially through a heat bath}
\label{sec:case-KMS}

Let us first consider the scenario \ref{item:KMS}. 
If the field is prepared in the $\beta$-KMS state $\omega_\beta$ and the detector moves along the inertial worldline $\xsf_{\vel}$ with constant non-zero velocity $\vel$, the transition rate is \cite{Costa-Matsas-background1995,Costa-Matsas1995}
\begin{gather}
	\label{eq:transition-rate-kms-moving}
	\Rcal_{\vel}^{(\beta)} (E) = \frac{1}{4\pi\beta\gam\vel} \, \ln\left(\frac{1-\e^{-\beta \dop_+ |E|}}{1-\e^{-\beta \dop_- |E|}} \right) - \frac{E}{2\pi} \Theta(-E) \equiv \frac{1}{4\pi\beta\gam\vel} \, \ln\left(\frac{1-\e^{-\beta \dop_+ E}}{1-\e^{-\beta \dop_- E}} \right)
\end{gather}
for $E\in\R\setminus\{0\}$, where we write $\Rcal_{\vel}^{(\beta)} := \Rcal_{\omega_\beta}$ for the transition rate (defined in Eq.\ \eqref{eq:transition-rate}) to highlight the dependence on $\vel$. 
The expression reveals the relativistic Doppler shift \cite{Rindler1977} of the radiation perceived by the detector. 
It holds $\Rcal_{-\vel}^{(\beta)} = \Rcal_{\vel}^{(\beta)}$ for $0<|\vel|<1$, which is to be expected since the KMS state $\omega_\beta$ is homogeneous and isotropic and thus the (direction-independent) transition rate of the monopole detector should not depend on the direction of the linear motion. 
We provide a derivation of Eq.\ \eqref{eq:transition-rate-kms-moving} in Appendix \ref{appendix:cm}. 
For small energy gaps, 
\begin{gather}
	\label{eq:transition-rate-kms-moving-small-gap}
	\Rcal_{\vel}^{(\beta)} (E) = \frac{\sqrt{1-\vel^2}}{4\pi\beta\vel} \ln\left(\frac{1+\vel}{1-\vel}\right) - \frac{E}{4\pi} + \mathcal{O}(E^2) \, .
\end{gather}

Some properties of the transition rate for $E>0$ can be extracted from Figure \ref{fig:rate-kms-moving}. 
\begin{figure}[!t]
	\centering
	\begin{subfigure}{.5\textwidth}
		\centering
		\includegraphics[width=\textwidth]{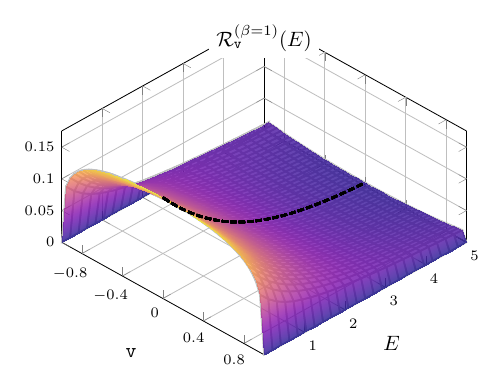}
		\caption{$(\vel,E)\mapsto\Rcal_{\vel}^{(\beta=1)} (E)$}
		\label{fig:sub1-kms-moving}
	\end{subfigure}%
	\begin{subfigure}{.5\textwidth}
		\centering
		\includegraphics[width=\textwidth]{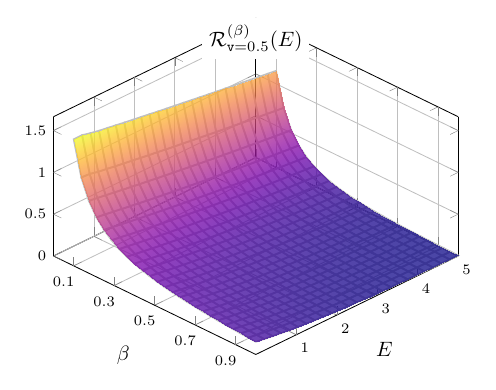}
		\caption{$(\beta,E)\mapsto\Rcal_{\vel=0.5}^{(\beta)} (E)$ ($\beta>0.1$)}
		\label{fig:sub2-kms-moving}
	\end{subfigure}
	\caption{\textbf{(a)} Surface plot of $(\vel,E)\mapsto \Rcal_{\vel}^{(\beta)} (E)$ for fixed $\beta=1$ and $\vel\in(-1,1)$, $E>0$. The transition rate $\Rcal_{\vel}^{(\beta)} (E)$ is given by Eq.\ \eqref{eq:transition-rate-kms-moving} and is even in $\vel$, i.e.\ $\Rcal_{-\vel}^{(\beta)} = \Rcal_{\vel}^{(\beta)}$. For non-negative $\vel$ the plot reproduces Fig.\ 1 in \cite{Costa-Matsas1995}. The maximum of $\vel\mapsto\Rcal_{\vel}^{(\beta)} (E)$ shifts from $\vel=0$ for $E\ll\beta^{-1} = 1$ to non-zero $|\vel|>0$ for $E\gg\beta^{-1} = 1$; see \cite{Costa-Matsas1995} for a discussion (and also the dashed lines in Figure \ref{fig:sub2-ness-moving-infty}, where this can be seen more clearly). The dashed surface curve at $\vel=0$ represents the transition rate of a stationary detector and is given by the Planckian $\Rcal_{0}^{(\beta=1)}$ (Eq.\ \eqref{eq:transition-rate-kms-rest}). \quad \textbf{(b)} Surface plot of $(\beta,E)\mapsto \Rcal_{\vel}^{(\beta)} (E)$ for fixed $\vel=0.5$ and $\beta>0.1$, $E>0$.}
	\label{fig:rate-kms-moving}
\end{figure}
In particular, as the velocity increases, the Doppler shift eventually prevents the response of the detector, so that $\Rcal_{\vel}^{(\beta)}$ vanishes pointwise as $|\vel|\to 1$ \cite{Costa-Matsas1995}. 
(For $E<0$ one obtains the vacuum contribution $-\frac{E}{2\pi}$ in this limit; see Eq.\ \eqref{eq:transition-rate-kms-moving}.) 
In the context of an experimental verification of the Unruh effect \cite{Unruh1976,Crispino-Higuchi-Matsas} this contributes to the conclusion that for a uniformly accelerating detector in a background thermal bath the influence of the latter is suppressed for sufficiently high accelerations \cite{Costa-Matsas-background1995}. 
The non-Planckian form of $\Rcal_{\vel}^{(\beta)}$ has been argued to prove the lack of a continuous Lorentz transformation law for temperature \cite{Landsberg-Matsas1996,Landsberg-Matsas2004} (which found a model-independent corroboration at the level of KMS states in \cite{Sewell2008,Sewell-rep2009}; see also \cite{PV-disj} for related remarks).\medskip

In the limit $\vel\to 0$, approaching the case \textsf{(KMS-$0$)} of a detector at rest relative to the reference frame of the thermal equilibrium state, one obtains the well-known transition rate
\begin{gather}
	\Rcal_{0}^{(\beta)} (E) = \lim\limits_{\vel\to 0} \Rcal_{\vel}^{(\beta)} (E) = \frac{1}{2\pi} \frac{|E|}{\e^{\beta |E|} -1} - \frac{E}{2\pi} \Theta(-E) \equiv \frac{1}{2\pi} \frac{E}{\e^{\beta E} -1}
	\label{eq:transition-rate-kms-rest}
\end{gather}
corresponding to a Planck (black body) spectrum at the temperature $\beta^{-1}$ \cite{Costa-Matsas-background1995,Costa-Matsas1995,Costa2004}. 
Notice that in the zero temperature limit $\beta\to\infty$ the vacuum transition rate (Eq.\ \eqref{eq:transition-rate-vac}) is recovered.

\subsection{Stationary detector coupled to a NESS}
\label{sec:case-NESS-stat}

We now turn to the scenario \ref{item:NESS} for $\beta_L , \beta_R >0$. 
From Eq.\ \eqref{eq:ness-two-point} we notice that $W_N (\xsf_{\vel} (\tau),\xsf_{\vel} (\tau')) = W_N (\xsf_{\vel} (\tau-\tau'),0)$ for all $\tau,\tau'\in\R$ (so Eq.\ \eqref{eq:transition-rate} for the transition rate applies). 
Define 
\begin{gather*}
	w_{N,\vel} (s):=W_N (\xsf_{\vel} (s),0) \, , \quad \widetilde{w}_{N,\vel} (s) := \widetilde{W}_N (\xsf_{\vel} (s),0) \equiv (W_N - W_{\mathrm{vac}}) (\xsf_{\vel} (s),0)
\end{gather*}
for the vacuum two-point function $W_{\mathrm{vac}}$, where $\widetilde{w}_{N,\vel}$ is a smooth function on $\R$ by the Hadamard property of the NESS $\omega_N$ (see Section \ref{sec:setup-ness}) with $\widetilde{W}_N$ given by Eq.\ \eqref{eq:ness-two-point-reg}. 
Let us first consider \textsf{(NESS-$0$)}, i.e.\ the detector is at rest relative to the frame $I$ of the NESS $\omega_N$ to which it is coupled. 

The modewise thermal interpretation of the NESS mentioned in Section \ref{sec:setup-ness} allows to infer that right- and left-moving modes (with momenta $p_1 > 0$ and $p_1 < 0$, respectively) each contribute half of a Planckian to the transition rate (Eq.\ \eqref{eq:transition-rate}), so the transition rate of a detector at rest in the NESS is given by an average of thermal transition rates corresponding to the inverse temperatures $\beta_L , \beta_R$. 
For completeness we provide a precise proof for this observation. 
\begin{proposition}[Transition rate of detector at rest in NESS]
	\label{prop:ness-stat}
	The transition rate for \textup{\textsf{(NESS-$0$)}} (with $\beta_L , \beta_R >0$) is given by
	\begin{gather}
		\label{eq:transition-rate-ness-stat}
		\Rcal_{N,0}^{(\beta_L , \beta_R)} (E) := \int\limits_{\R} \e^{-iEs} w_{N,0} (s) \diff s = \frac{1}{2} \left(\Rcal_{0}^{(\beta_L)} (E) + \Rcal_{0}^{(\beta_R)} (E)\right) \, ,
	\end{gather}
	which is the arithmetic mean of the Planckian transition rates (Eq.\ \eqref{eq:transition-rate-kms-rest}) corresponding to the inverse temperatures $\beta_L , \beta_R$ of the two initial semi-infinite heat baths.
\end{proposition}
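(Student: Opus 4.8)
The plan is to reduce the $\vel=0$ NESS response to a convex combination of two purely thermal (Planckian) responses by exploiting the reflection symmetry $p_1\mapsto -p_1$ of the momentum integral in Eq.~\eqref{eq:ness-two-point}. The decisive simplification is that along the worldline $\xsf_0(s)=(s,0,0,0)$ the spatial separation $\xvec-\yvec$ vanishes, so the factor $\e^{i\pvec\cdot(\xvec-\yvec)}$ becomes trivial and the integrand of $w_{N,0}(s)=W_N(\xsf_0(s),0)$ depends on $\pvec$ only through $\lVert\pvec\rVert$ and through the step weights $\upbeta(\pm p_1)$. First I would write out $w_{N,0}(s)$ explicitly from Eq.~\eqref{eq:ness-two-point} by setting $x^0=s$, $y^0=0$, $\xvec=\yvec=0$, and record alongside it the pulled-back $\beta$-KMS kernel $w_\beta(s):=W_\beta(\xsf_0(s),0)$ obtained the same way from Eq.~\eqref{eq:thermal-two-point}, whose Fourier transform is the Planckian $\Rcal_0^{(\beta)}$ of Eq.~\eqref{eq:transition-rate-kms-rest}.

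The core step is the symmetrization. Since both $\lVert\pvec\rVert$ and the measure $\Diff3\pvec$ are invariant under $p_1\mapsto -p_1$, each of the two momentum integrals appearing in $w_{N,0}$ splits over the half-spaces $\{p_1>0\}$ and $\{p_1<0\}$ into two pieces that, after the reflection, are each exactly half of a full $\R^3$-integral whose integrand depends only on $\lVert\pvec\rVert$. Because $\upbeta(-p_1)=\beta_R$ on $\{p_1>0\}$ and $\beta_L$ on $\{p_1<0\}$ (and conversely for $\upbeta(p_1)$), collecting all $\beta_L$-weighted halves and all $\beta_R$-weighted halves reproduces, with the correct relative sign, precisely the kernels $w_{\beta_L}$ and $w_{\beta_R}$, yielding the pointwise identity $w_{N,0}(s)=\tfrac12\bigl(w_{\beta_L}(s)+w_{\beta_R}(s)\bigr)$. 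Applying the Fourier transform defining the transition rate in Eq.~\eqref{eq:transition-rate} and using its linearity then gives the claimed arithmetic mean.

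The one genuine technical point is that $W_N$ is a bidistribution, so the splitting of the integral and the interchange of the $s$-transform with the $\pvec$-integration require justification. I would dispose of this by performing the symmetrization not on $w_{N,0}$ itself but on its smooth part $\widetilde{w}_{N,0}(s)=\widetilde{W}_N(\xsf_0(s),0)$ of Eq.~\eqref{eq:ness-two-point-reg}, which is an honest integrable function for which the reflection and the Fourier transform are unproblematic, obtaining $\widetilde{w}_{N,0}=\tfrac12(\widetilde{w}_{\beta_L}+\widetilde{w}_{\beta_R})$, and then restoring the common Minkowski-vacuum contribution. Since the vacuum response $\Rcal_{\mathrm{vac}}(E)=-\tfrac{E}{2\pi}\Theta(-E)$ of Eq.~\eqref{eq:transition-rate-vac} is shared by all three kernels, it recombines exactly so that the transform of $\tfrac12(w_{\beta_L}+w_{\beta_R})$ equals $\tfrac12\bigl(\Rcal_0^{(\beta_L)}+\Rcal_0^{(\beta_R)}\bigr)$. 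The remaining manipulations are routine, so I expect no substantive obstacle beyond this distributional bookkeeping.
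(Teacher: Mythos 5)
Your proof is correct and follows essentially the same route as the paper: setting $\xvec=\yvec=\vec{0}$, reflecting $p_1\mapsto-p_1$, splitting over the half-spaces where $\upbeta(\pm p_1)$ is constant, and working with the smooth part $\widetilde{w}_{N,0}$ before restoring the common vacuum contribution. The only (harmless) difference is at the end: you identify $\widetilde{w}_{N,0}=\tfrac12(\widetilde{w}_{\beta_L}+\widetilde{w}_{\beta_R})$ pointwise and invoke the already-established Planckian rate of Eq.\ \eqref{eq:transition-rate-kms-rest} by linearity, whereas the paper evaluates the resulting radial cosine integral $C_\beta$ in closed form and inverts the Fourier cosine transform explicitly.
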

\begin{proof}
	From Eq.\ \eqref{eq:ness-two-point-reg} we get
	\begin{gather*}
		\widetilde{w}_{N,0} (s)=\frac{1}{(2\pi)^3} \int\limits_{\R^3} \frac{1}{2\lVert\pvec\rVert} \left( \frac{\e^{i\lVert\pvec\rVert s}}{\e^{\upbeta(-p_1)\lVert\pvec\rVert} - 1} + \frac{\e^{-i\lVert\pvec\rVert s}}{\e^{\upbeta(p_1)\lVert\pvec\rVert} - 1} \right) \, \Diff3\pvec
	\end{gather*}
	for $s\in\R$. 
	This boils down to the sum of two integrals of the form $\int_0^\infty \e^{irs} \frac{r}{\e^{\beta r} - 1} \diff r$, which converges for all $\beta>0$ and $s\in\R$ (see \cite[Sec.\ 1.4, (8)]{Erdelyi} and \cite[Sec.\ 2.3, 3.13]{Oberhettinger-Fourier}). 
	Substituting $p_1 \mapsto -p_1$ in the first term one finds that
	\begin{gather}
		\widetilde{w}_{N,0} (s) = \frac{1}{(2\pi)^3} \int\limits_{\R^3} \frac{1}{\lVert\pvec\rVert} \frac{\cos(\lVert\pvec\rVert s)}{\e^{\upbeta(p_1)\lVert\pvec\rVert} - 1} \, \Diff3\pvec = \frac{1}{4\pi^2} \left( C_{\beta_L} (s) + C_{\beta_R} (s) \right) \label{eq:ness-two-point-stat}
	\end{gather}
	by Eq.\ \eqref{eq:beta-ness} and using spherical coordinates on the half-spaces with $p_1 > 0$ and $p_1 < 0$, where 
	\begin{gather}
		\label{eq:Cbeta}
		C_\beta (s) := \int_0^\infty \frac{r\cos(rs)}{\e^{\beta r} - 1} \diff r = \frac{1}{2s^2} - \frac{\pi^2}{2\beta^2} \frac{1}{\sinh^2 \left(\frac{\pi s}{\beta}\right)} \, , \quad \beta > 0 \, , \, s\in\R
	\end{gather}
	by \cite[Sec.\ 1.4, (8)]{Erdelyi}, understood as smooth function on $\R$ with $\lim_{s\to 0} C_\beta (s) = \frac{\pi^2}{6\beta^2} = C_\beta (0)$ (see \cite[Sec.\ 6.3, (7)]{Erdelyi}). 
	Writing $w_{N,0} (s) = \widetilde{w}_{N,0} (s) + W_{\mathrm{vac}} (\xsf_{0} (s),0)$, we have 
	\begin{gather*}
		\Rcal_{N,0}^{(\beta_L , \beta_R)} (E) = 2 \int_0^\infty \cos(|E|s) \widetilde{w}_{N,0} (s) \diff s - \frac{E}{2\pi} \Theta(-E)
	\end{gather*}
	by Eq.\ \eqref{eq:transition-rate-vac}, where we used that the function $\widetilde{w}_{N,0}$ is even and thus the exponential in the Fourier transform can be replaced by $\cos(Es)=\cos(|E|s)$. 
	Since $C_\beta \in L^1 (\R)$ the Fourier cosine transform in Eq.\ \eqref{eq:Cbeta} can be inverted to $\int_0^\infty \cos(qs) C_\beta (s) \diff s = \frac{\pi}{2} q(\e^{\beta q} - 1)^{-1}$ for $\beta,q>0$. 
	Hence
	\begin{gather*}
		\Rcal_{N,0}^{(\beta_L , \beta_R)} (E) = \frac{1}{4\pi} \left( \frac{|E|}{\e^{\beta_L |E|}-1} + \frac{|E|}{\e^{\beta_R |E|}-1} \right) - \frac{E}{2\pi} \Theta(-E)
	\end{gather*}
	and the result follows. 
\end{proof}

\begin{figure}[!t]
	\centering
	\begin{subfigure}{.5\textwidth}
		\centering
		\includegraphics[width=\textwidth]{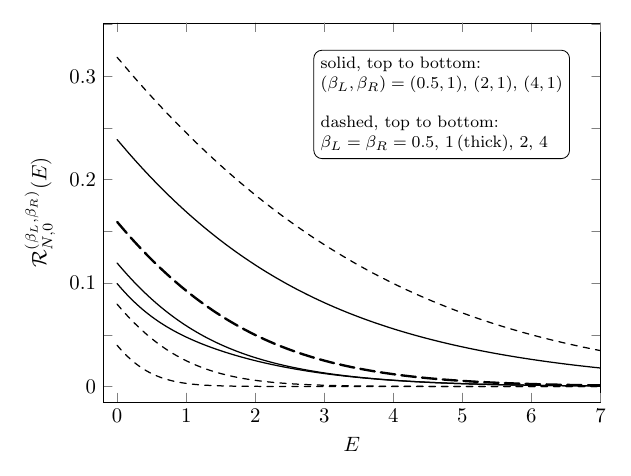}
		\caption{$\Rcal_{N,0}^{(\beta_L , \beta_R = 1)}$ (solid) and $\Rcal_{0}^{(\beta)}$ (dashed)}
		\label{fig:sub1-ness-kms-stationary}
	\end{subfigure}%
	\begin{subfigure}{.5\textwidth}
		\centering
		\includegraphics[width=\textwidth]{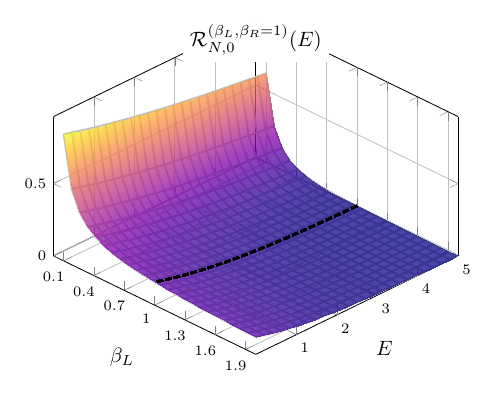}
		\caption{$(\beta_L , E) \mapsto \Rcal_{N,0}^{(\beta_L , \beta_R = 1)} (E)$ ($\beta_L > 0.1$)}
		\label{fig:sub2-ness-stationary}
	\end{subfigure}
	\caption{\textbf{(a)} Plot of the transition rate $\Rcal_{N,0}^{(\beta_L , \beta_R)}$, given by Eq.\ \eqref{eq:transition-rate-ness-stat}, for $E>0$, fixed $\beta_R=1$, and different values of $\beta_L$ (solid), in comparison with the Planckian transition rates $\Rcal_{0}^{(\beta)}$ (Eq.\ \eqref{eq:transition-rate-kms-rest}) for $\beta$ from the same set as $\beta_L$ (dashed). The rate $\Rcal_{0}^{(\beta=1)}$, which equals $\Rcal_{N,0}^{(\beta_L = 1 , \beta_R = 1)}$, is highlighted as thick dashed line. \quad \textbf{(b)} Surface plot of $(\beta_L , E) \mapsto \Rcal_{N,0}^{(\beta_L , \beta_R)} (E)$ for fixed $\beta_R=1$ and $\beta_L > 0.1$, $E>0$. The dashed surface curve at $\beta_L = 1$ represents the Planckian $\Rcal_{0}^{(\beta=1)}$.}
	\label{fig:rate-ness-stationary}
\end{figure}

For $\beta_L = \beta_R =: \beta$ the transition rate $\Rcal_{N,0}^{(\beta_L , \beta_R)}$ equals the Planckian $\Rcal_{0}^{(\beta)}$, as expected. 
For a proper NESS with $\beta_L \neq \beta_R$, however, the detector is excited by modes coming from $x^1 = -\infty$ and $x^1 = \infty$ with Planck spectrum of temperature $\beta_L^{-1}$ and $\beta_R^{-1}$, respectively. 
The detector, which is at rest relative to the two heat baths, effectively reacts to a mixture of the corresponding KMS states. 
If the transition spectrum is sampled for a sufficiently large set of detector energies $E$, one will always find a non-thermal detector response (i.e.\ not corresponding to a Planck spectrum). 
Figure \ref{fig:rate-ness-stationary} shows plots of $\Rcal_{N,0}^{(\beta_L , \beta_R)}$ for fixed $\beta_R = 1$ and different values of $\beta_L$ when $E>0$. 
Irrespective of $\beta_L , \beta_R$, the transition rate vanishes as $E\to\infty$ and is asymptotic to the function $E\mapsto -\frac{E}{2\pi}$ as $E\to -\infty$, i.e.\ $\lim_{E\to -\infty} (\Rcal_{N,0}^{(\beta_L , \beta_R)} (E) + \frac{E}{2\pi})=0$.

\subsection{Detector moving inertially through a NESS}
\label{sec:case-NESS-vel}

Consider the case \ref{item:NESS} for $\vel\in(-1,1)\setminus\{0\}$, that is, the detector couples to $\omega_N$ and travels along the inertial worldline $\xsf_{\vel}$ (Eq.\ \eqref{eq:inertial-worldline}) with constant non-zero velocity $\vel$. 

\begin{proposition}[Transition rate of detector moving inertially through NESS]
	\label{prop:ness-v}
	The transition rate for \textup{\textsf{(NESS-$\vel$)}} (with $\beta_L , \beta_R >0$ and $\vel\in(-1,1)\setminus\{0\}$) is given by
	\begin{gather}
		\label{eq:transition-rate-ness-vel}
		\Rcal_{N,\vel}^{(\beta_L , \beta_R)} (E) := \int\limits_{\R} \e^{-iEs} w_{N,\vel} (s) \diff s = R(\beta_L , -\vel , E) + R(\beta_R , \vel , E) \, , 
	\end{gather}
	where
	\begin{gather}
		\label{eq:R-function}
		R(\beta,\vel,E) := \frac{\sqrt{1-\vel^2}}{4\pi\vel\beta} \ln\left(\frac{1-\e^{-\beta \sqrt{1-\vel^2} E}}{1-\e^{-\beta \sqrt{\frac{1-\vel}{1+\vel}} E}} \right)
	\end{gather}
	for $\beta>0$, $\vel\in(-1,1)\setminus\{0\}$, $E\in\R\setminus\{0\}$. 
	Expanded in small $|E|$, 
	\begin{gather}
		\label{eq:transition-rate-ness-vel-small-gap}
		\Rcal_{N,\vel}^{(\beta_L , \beta_R)} (E) = \frac{\sqrt{1-\vel^2}}{4\pi\vel} \left( \frac{1}{\beta_R} \ln(1+\vel) - \frac{1}{\beta_L} \ln(1-\vel) \right) -\frac{E}{4\pi} + \mathcal{O}(E^2) \, .
	\end{gather}
\end{proposition}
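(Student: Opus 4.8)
The plan is to reduce the Fourier transform in Eq.~\eqref{eq:transition-rate-ness-vel} to two \emph{hemispherical} momentum integrals, each carrying one of the two inverse temperatures, and to handle the vacuum piece separately via Eq.~\eqref{eq:transition-rate-vac}. First I would pull the smooth part $\widetilde{W}_N$ of Eq.~\eqref{eq:ness-two-point-reg} back along the worldline $\xsf_{\vel}$ of Eq.~\eqref{eq:inertial-worldline}: the time difference becomes $x^0(s)=\gam s$ and $\pvec\cdot(\xvec(s)-\yvec)$ becomes $p_1\gam\vel s$, so that $\widetilde{w}_{N,\vel}(s)$ is the momentum integral of Eq.~\eqref{eq:ness-two-point-reg} with these replacements and the extra oscillating factor $\e^{ip_1\gam\vel s}$. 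A change of variables $\pvec\mapsto-\pvec$ in one of the two summands (which swaps $\upbeta(p_1)\leftrightarrow\upbeta(-p_1)$ and the two exponentials) shows that $\widetilde{w}_{N,\vel}$ is real and even in $s$; hence $G(E):=\int_{\R}\e^{-iEs}\widetilde{w}_{N,\vel}(s)\diff s$ is real and even in $E$. Since $w_{N,\vel}=\widetilde{w}_{N,\vel}+W_{\mathrm{vac}}(\xsf_{\vel}(\slot),0)$, it remains to compute $G$ and add $\Rcal_{\mathrm{vac}}$.

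For $E>0$ I would carry out the $s$-integration first, turning the two exponentials $\e^{i\gam(\lVert\pvec\rVert+p_1\vel)s}$ and $\e^{i\gam(p_1\vel-\lVert\pvec\rVert)s}$ into $2\pi\delta(\gam(\lVert\pvec\rVert+p_1\vel)-E)$ and $2\pi\delta(\gam(p_1\vel-\lVert\pvec\rVert)-E)$. In spherical coordinates with $p_1=\lVert\pvec\rVert u$, $u=\cos\theta\in[-1,1]$, the azimuthal and radial integrations against the delta are trivial, and only the first summand (the one with $\e^{i\lVert\pvec\rVert(x^0-y^0)}$ and $\upbeta(-p_1)$) survives for $E>0$, with root $\lVert\pvec\rVert=E/(\gam(1+u\vel))$. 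The Heaviside structure of $\upbeta$ in Eq.~\eqref{eq:beta-ness} assigns $\beta_R$ to the hemisphere $u>0$ and $\beta_L$ to $u<0$, and the substitution $\xi=E/(\gam(1+u\vel))$ (so $\diff u/(1+u\vel)^2=-(\gam/(E\vel))\diff\xi$) reduces each hemisphere to $\tfrac{1}{4\pi\gam\vel}\int\diff\xi/(\e^{\beta\xi}-1)$. Using $\int\diff\xi/(\e^{\beta\xi}-1)=\beta^{-1}\ln(1-\e^{-\beta\xi})$ together with the endpoint values $\xi=\sqrt{1-\vel^2}\,E$ at $u=0$, $\xi=\dop_-E$ at $u=1$ and $\xi=\dop_+E$ at $u=-1$ (recall $\dop_\pm=\gam(1\pm\vel)$ and $\dop_+\dop_-=1$) yields exactly $R(\beta_R,\vel,E)+R(\beta_L,-\vel,E)$, i.e.\ Eq.~\eqref{eq:transition-rate-ness-vel} for $E>0$. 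Setting $\beta_L=\beta_R$ recombines the two hemispheres into the full-sphere integral and reproduces $\Rcal_{\vel}^{(\beta)}$ of Eq.~\eqref{eq:transition-rate-kms-moving}, a useful consistency check; this step is essentially the one-hemisphere analogue of the KMS computation in Appendix~\ref{appendix:cm}.

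The case $E<0$ is where the bookkeeping must be done carefully, and I expect it to be the main obstacle, since the vacuum term $-\tfrac{E}{2\pi}\Theta(-E)=-\tfrac{E}{2\pi}$ no longer vanishes. Here I would exploit that $G$ is even, so $G(E)=G(|E|)=R(\beta_R,\vel,|E|)+R(\beta_L,-\vel,|E|)$, and then invoke the same analytic-continuation manipulation that passes between the two forms of Eq.~\eqref{eq:transition-rate-kms-moving}: the identity $1-\e^{a}=-\e^{a}(1-\e^{-a})$ converts each $R$ at argument $E<0$ into its value at $|E|$ plus a linear term. For $R(\beta_R,\vel,\cdot)$, with exponents $\sqrt{1-\vel^2}$ and $\dop_-$, this extra term is $\tfrac{\sqrt{1-\vel^2}(\sqrt{1-\vel^2}-\dop_-)}{4\pi\vel}|E|=\tfrac{(1-\vel)}{4\pi}|E|$; for $R(\beta_L,-\vel,\cdot)$, with exponents $\sqrt{1-\vel^2}$ and $\dop_+$, it is $\tfrac{(1+\vel)}{4\pi}|E|$. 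Their sum is $\tfrac{|E|}{2\pi}=-\tfrac{E}{2\pi}$, whence $R(\beta_R,\vel,E)+R(\beta_L,-\vel,E)=G(|E|)-\tfrac{E}{2\pi}=G(E)+\Rcal_{\mathrm{vac}}(E)$, establishing Eq.~\eqref{eq:transition-rate-ness-vel} for all $E\neq0$.

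Finally, for the small-gap expansion \eqref{eq:transition-rate-ness-vel-small-gap} I would expand each $R$ directly, writing $A=\sqrt{1-\vel^2}$, $B=\dop_-$. From $\ln(1-\e^{-\beta AE})=\ln(\beta AE)-\tfrac{\beta AE}{2}+\mathcal{O}(E^2)$ one gets $R(\beta,\vel,E)=\tfrac{\sqrt{1-\vel^2}}{4\pi\vel\beta}\ln\tfrac{A}{B}-\tfrac{\sqrt{1-\vel^2}(A-B)}{8\pi\vel}E+\mathcal{O}(E^2)$; using $A/B=1+\vel$ and $\sqrt{1-\vel^2}(A-B)/\vel=\sqrt{1-\vel^2}\,\dop_-=1-\vel$ this becomes $\tfrac{\sqrt{1-\vel^2}}{4\pi\vel\beta}\ln(1+\vel)-\tfrac{1-\vel}{8\pi}E$. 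Adding the $\vel\mapsto-\vel$, $\beta\mapsto\beta_L$ counterpart collapses the two linear pieces to $-\tfrac{(1-\vel)+(1+\vel)}{8\pi}E=-\tfrac{E}{4\pi}$ and produces the stated constant term, completing the expansion.
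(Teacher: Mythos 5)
Your proposal is correct, and every identity I checked goes through: the selection of the single surviving summand for $E>0$, the hemisphere assignment $\beta_R\leftrightarrow u>0$, $\beta_L\leftrightarrow u<0$ coming from $\upbeta(-p_1)$, the endpoint values $\xi=\sqrt{1-\vel^2}\,E$, $\dop_\pm E$, the identity $\sqrt{1-\vel^2}\,(\sqrt{1-\vel^2}-\dop_\mp)=\pm(1\mp\vel)\vel$ that makes the two linear terms sum to $-\tfrac{E}{2\pi}$ for $E<0$, and the small-$|E|$ expansion. However, your route is genuinely different from the paper's. The paper never produces delta functions: it first evaluates the angular integral, reduces the radial integral to the tabulated Fourier sine transform \eqref{eq:sine-transf}, and thereby obtains the pulled-back two-point function $\widetilde{w}_{N,\vel}(s)$ in closed form (Eq.\ \eqref{eq:ness-two-point-vel-2}, a combination of the functions $f_c$ of Eq.\ \eqref{eq:fc}); only then does it Fourier transform in $s$ using the known transform \eqref{eq:fc-fourier}. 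You instead perform the $s$-integration first, localizing the momentum integral onto the Doppler-shifted shell $\lVert\pvec\rVert=E/(\gam(1+u\vel))$ and reducing each hemisphere to the elementary antiderivative $\beta^{-1}\ln(1-\e^{-\beta\xi})$. Your version is shorter and physically more transparent (the Doppler window $[\dop_- E,\dop_+ E]$ split at $\sqrt{1-\vel^2}\,E$ is manifest, and your evenness-plus-algebra treatment of $E<0$ is cleaner than the paper's ``absorbed by some rearrangements''), but it forgoes the closed-form expression for $\widetilde{w}_{N,\vel}$, which the paper also uses to check the limits $\vel\to 0$ and $\beta_L=\beta_R$ at the level of the two-point function. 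The one point you should flag is that interchanging the $s$- and $\pvec$-integrations to produce $2\pi\delta(\cdot)$ is a formal step (the double integral is not absolutely convergent), so strictly one should insert a regulator such as the Gaussian switching of Section \ref{sec:setup-transition-rate} or an $\e^{-\epsilon s^2}$ factor and pass to the limit; the paper's ordering of operations avoids this issue because every intermediate object there is an honest convergent integral. This is a matter of rigor rather than a gap, since the localized integrand is smooth and bounded on the shell for $E\neq 0$.
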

\begin{proof}
	As $\pvec\cdot\vec{\xsf_{\vel} (s)} = p_1 \gam\vel s$, Eq.\ \eqref{eq:ness-two-point-reg} implies
	\begin{gather}
		\label{eq:ness-two-point-vel}
		\widetilde{w}_{N,\vel} (s) = \widetilde{W}_N (\xsf_{\vel} (s),0) = \frac{1}{(2\pi)^3} \int\limits_{\R^3} \frac{1}{2\lVert\pvec\rVert} \left( \frac{\e^{ip_1 \gam \vel s} \e^{i\lVert\pvec\rVert \gam s}}{\e^{\upbeta(-p_1)\lVert\pvec\rVert} - 1} + \frac{\e^{ip_1 \gam \vel s} \e^{-i\lVert\pvec\rVert \gam s}}{\e^{\upbeta(p_1)\lVert\pvec\rVert} - 1} \right) \, \Diff3\pvec \, ,
	\end{gather}
	and substituting $p_1 \mapsto -p_1$ in the first term yields
	\begin{gather}
		\label{eq:ness-two-point-vel-1}
		\widetilde{w}_{N,\vel} (s) = \frac{1}{(2\pi)^3} \int\limits_{\R^3} \frac{1}{\lVert\pvec\rVert} \frac{\cos\left(p_1 \gam \vel s - \lVert\pvec\rVert \gam s \right)}{\e^{\upbeta(p_1)\lVert\pvec\rVert} - 1} \, \Diff3\pvec \, .
	\end{gather}
	This integral is split into a sum according to the definition of $\upbeta(p_1)$ (Eq.\ \eqref{eq:beta-ness}). 
	Choosing spherical coordinates with $\theta$ the polar angle from the polar axis placed along the positive $x^1$-axis one finds by Eq.\ \eqref{eq:sine-transf}
	\begin{align}
		\int\limits_{\{p_1 > 0\} \times \R^2} \frac{1}{\lVert\pvec\rVert} & \frac{\cos\left(p_1 \gam \vel s - \lVert\pvec\rVert \gam s \right)}{\e^{\upbeta(p_1)\lVert\pvec\rVert} - 1} \, \Diff3\pvec = \nonumber \\ & = 2\pi \int_0^\infty \int_0^{\pi/2} \frac{r \cos\left(\gam \vel s r\cos(\theta) - \gam s r \right)}{\e^{\beta_L r} - 1} \, \sin(\theta) \diff\theta \diff r = \nonumber \\ & = -\frac{2\pi}{\gam\vel s} \int_0^\infty \frac{\sin\left(\gam(1-\vel)sr\right)-\sin(\gam sr)}{\e^{\beta_L r} - 1} \, \diff r = \nonumber \\ & = -\frac{2\pi}{\gam\vel s} \left[ -\frac{\dop_+}{2s} + \frac{\pi}{2\beta_L} \coth\left( \frac{\pi s}{\beta_L \dop_+} \right) + \frac{1}{2\gam s} - \frac{\pi}{2\beta_L} \coth\left( \frac{\pi\gam s}{\beta_L} \right) \right] \label{eq:ness-vel-1}
	\end{align}
	for the Doppler factors $\dop_\pm = \gam(1\pm\vel)$. 
	This equation holds for all $s\in\R$, with the last expression smoothly extending to $s=0$. 
	Similarly,
	\begin{align}
		\int\limits_{\{p_1 < 0\} \times \R^2} \frac{1}{\lVert\pvec\rVert} & \frac{\cos\left(p_1 \gam \vel s - \lVert\pvec\rVert \gam s \right)}{\e^{\upbeta(p_1)\lVert\pvec\rVert} - 1} \, \Diff3\pvec = \nonumber \\ & = 2\pi \int_0^\infty \int_{\pi/2}^{\pi} \frac{r \cos\left(\gam \vel s r\cos(\theta) - \gam s r \right)}{\e^{\beta_R r} - 1} \, \sin(\theta) \diff\theta \diff r = \nonumber \\ & = \frac{2\pi}{\gam\vel s} \left[ -\frac{\dop_-}{2s} + \frac{\pi}{2\beta_R} \coth\left( \frac{\pi s}{\beta_R \dop_-} \right) + \frac{1}{2\gam s} - \frac{\pi}{2\beta_R} \coth\left( \frac{\pi\gam s}{\beta_R} \right) \right] \, . \label{eq:ness-vel-2}
	\end{align}
	Taking the sum of \eqref{eq:ness-vel-1} and \eqref{eq:ness-vel-2} and using $\dop_+ - \dop_- = 2\gam\vel$, Eq.\ \eqref{eq:ness-two-point-vel-1} becomes
	\begin{align}
		&\widetilde{w}_{N,\vel} (s) = \frac{1}{4\pi^2 s^2} + \nonumber \\ & + \frac{1}{8\pi\gam\vel\beta_R s} \left[ \coth\left( \frac{\pi s}{\beta_R \dop_-} \right) - \coth\left( \frac{\pi\gam s}{\beta_R} \right) \right] - \frac{1}{8\pi\gam\vel\beta_L s} \left[ \coth\left( \frac{\pi s}{\beta_L \dop_+} \right) - \coth\left( \frac{\pi\gam s}{\beta_L} \right) \right] = \nonumber \\ & = \frac{1}{4\pi\gam\vel} \left[ \frac{1}{\beta_R} \left( f_{\beta_R / \gam} (s) - f_{\beta_R \dop_-} (s) \right) - \frac{1}{\beta_L} \left( f_{\beta_L / \gam} (s) - f_{\beta_L \dop_+} (s) \right) \right]
		\label{eq:ness-two-point-vel-2}
	\end{align}
	for the smooth function $f_c$ ($c>0$) defined in Eq.\ \eqref{eq:fc}. 
	An alternative derivation is presented in Appendix \ref{appendix:NESS-v}. 
	For $\beta_L = \beta_R =: \beta > 0$, Eq.\ \eqref{eq:ness-two-point-vel-2} reduces to Eq.\ \eqref{eq:two-point-reg} (case \textsf{(KMS-$\vel$)}). 
	Furthermore, for all $\beta_L , \beta_R > 0$, one regains Eq.\ \eqref{eq:ness-two-point-stat} in the limit $\vel\to 0$ (case \textsf{(NESS-$0$)}). 
	Using Eq.\ \eqref{eq:transition-rate-vac} together with Eq.\ \eqref{eq:ness-two-point-vel-2} and the Fourier transform of the function $f_c$ ($c>0$) given in Eq.\ \eqref{eq:fc-fourier}, we obtain
	\begin{align*}
		\Rcal_{N,\vel}^{(\beta_L , \beta_R)} (E) &= \int\limits_{\R} \e^{-iEs} \widetilde{w}_{N,\vel} (s) \diff s - \frac{E}{2\pi} \Theta(-E) = \\ &= \frac{1}{4\pi\gam\vel} \left[ \frac{1}{\beta_R} \ln\left(\frac{1-\e^{-\beta_R |E|/\gam}}{1-\e^{-\beta_R \dop_- |E|}} \right) - \frac{1}{\beta_L} \ln\left(\frac{1-\e^{-\beta_L |E|/\gam}}{1-\e^{-\beta_L \dop_+ |E|}} \right) \right] - \frac{E}{2\pi} \Theta(-E) = \\ &= \frac{1}{4\pi\gam\vel} \left[ \frac{1}{\beta_R} \ln\left(\frac{1-\e^{-\beta_R E/\gam}}{1-\e^{-\beta_R \dop_- E}} \right) - \frac{1}{\beta_L} \ln\left(\frac{1-\e^{-\beta_L E/\gam}}{1-\e^{-\beta_L \dop_+ E}} \right) \right] \, ,
	\end{align*}
	where the Heaviside function term is absorbed by some rearrangements.
\end{proof}

The motion of the detector along the $x^1$-axis with velocity $\vel\neq 0$ relative to the two semi-infinite heat baths results in a transition rate $\Rcal_{N,\vel}^{(\beta_L , \beta_R)}$ that deviates from the mean transition rate $\Rcal_{N,0}^{(\beta_L , \beta_R)} = \lim_{\vel\to 0} \Rcal_{N,\vel}^{(\beta_L , \beta_R)}$ corresponding to the individual $\beta_L$- and $\beta_R$-KMS states (Eq.\ \eqref{eq:transition-rate-ness-stat}). 
One can interpret this behavior to be similar to the case \textsf{(KMS-$\vel$)} of a detector moving with constant velocity relative to an inertial KMS state (Section \ref{sec:case-KMS}). 
In contrast to $\Rcal_{\vel}^{(\beta)}$ (Eq.\ \eqref{eq:transition-rate-kms-moving}), however, the transition rate $\Rcal_{N,\vel}^{(\beta_L , \beta_R)}$ is not an even function of $\vel$ unless $\beta_L = \beta_R =: \beta$, in which case the detector interacts with a global $\beta$-KMS state and the transition rate reduces to $\Rcal_{N,\vel}^{(\beta , \beta)} = \Rcal_{\vel}^{(\beta)}$. 
Moreover, $\Rcal_{N,\vel}^{(\beta_L , \beta_R)}$ is not invariant under the exchange of $\beta_L$ and $\beta_R$ for non-zero detector velocity and $\beta_L \neq \beta_R$. 
Hence we find that a relation analogous to Eq.\ \eqref{eq:transition-rate-ness-stat} does not hold, to wit,
\begin{gather*}
	\Rcal_{N,\vel}^{(\beta_L , \beta_R)} \neq \frac{1}{2} \left(\Rcal_{\vel}^{(\beta_L)} + \Rcal_{\vel}^{(\beta_R)} \right) \, , \quad \vel\in(-1,1)\setminus\{0\} \, , \quad \beta_L \neq \beta_R
\end{gather*}
for $\Rcal_{\vel}^{(\beta)}$ given by Eq.\ \eqref{eq:transition-rate-kms-moving}. 
However, the transition rate $\Rcal_{N,\vel}^{(\beta_L , \beta_R)}$ is invariant under simultaneously exchanging the $\beta_L$- and $\beta_R$-KMS heat baths as well as reversing the direction of motion along the $x^1$-axis, i.e.\ $\vel\mapsto -\vel$. 
This behavior is to be expected, because the detector's motion relative to the semi-infinite heat baths distinguishes a direction (determined by the sign of $\vel$) in relation to the thermal gradient of the NESS (determined by the sign of $\beta_R - \beta_L$), manifesting in a Doppler shift as signified by the Doppler factors appearing in Eq.\ \eqref{eq:transition-rate-ness-vel}. 
No such direction is distinguished in the stationary case \textsf{(NESS-$0$)}; the transition rate incorporates quanta incoming from all spatial directions, making $\Rcal_{N,0}^{(\beta_L , \beta_R)}$ symmetric in $\beta_L , \beta_R$ despite the presence of a heat flow.\medskip

\begin{figure}[!t]
	\centering
	\begin{subfigure}{.5\textwidth}
		\centering
		\includegraphics[width=\textwidth]{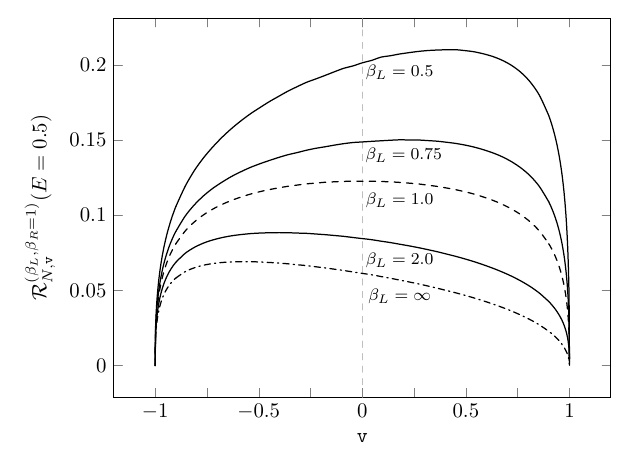}
	\end{subfigure}%
	\begin{subfigure}{.5\textwidth}
		\centering
		\includegraphics[width=\textwidth]{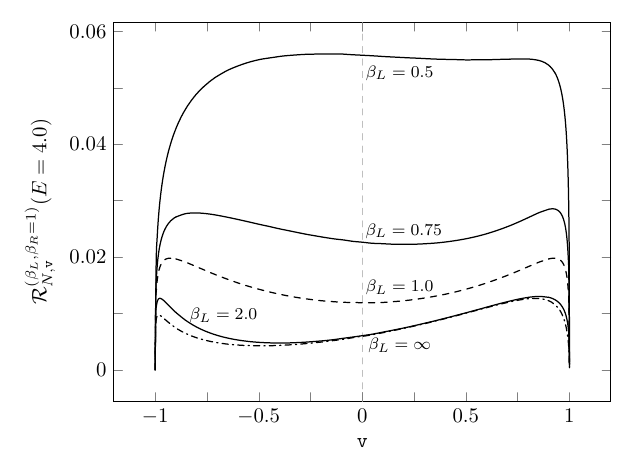}
	\end{subfigure}
	\caption{Plot of $\Rcal_{N,\vel}^{(\beta_L , \beta_R)} (E)$ as a function of $\vel$ for $\beta_R=1$, $E=0.5$ (left) and $E=4.0$ (right), and different values of $\beta_L$. In both plots the dashed line represents $\beta_L = \beta_R = 1$ and thus the transition rate $\Rcal_{\vel}^{(\beta=1)}$ for the $(\beta=1)$-KMS state (Eq.\ \eqref{eq:transition-rate-kms-moving}), and the case $\beta_L = \infty$ is shown by a dash-dotted line.}
	\label{fig:rate-ness-moving}
\end{figure}

For the rest of this section we restrict our attention to positive energy gaps $E>0$. 
Figure \ref{fig:rate-ness-moving} shows the transition rate $\Rcal_{N,\vel}^{(\beta_L , \beta_R)}$ as a function of $\vel$ for arbitrarily chosen values $\beta_R = 1$, $\beta_L \in\{0.5,0.75,1,2,\infty\}$, $E\in\{0.5,4\}$. 
First of all, we see that $\Rcal_{N,\vel}^{(\beta_L , \beta_R)}$ vanishes pointwise for $|\vel|\to 1$. 
Following the explanation given in \cite{Costa-Matsas-background1995,Costa-Matsas1995} for a detector moving in a heat bath, we can interpret this in terms of the Doppler effect gradually shifting the quanta of the surrounding NESS outside the detectable range as $|\vel|$ approaches the speed of light, thereby impeding the excitation of the detector. 
The plots also reveal the asymmetry in $\vel$ for $\beta_L \neq \beta_R$ and fixed $E$, which reflects the asymmetry under the exchange of the heat baths as noted above. 
At certain velocities the Doppler shift facilitates the excitation of the detector: 
For $E=0.5$ the maximal transition rate is attained for some $\vel>0$ when $\beta_L < \beta_R$, and for some $\vel<0$ when $\beta_L > \beta_R$. 
For larger energy gaps, exemplified by the value $E=4.0$ in Figure \ref{fig:rate-ness-moving}, the transition rate admits two maxima located towards large $|\vel|$.

On physical grounds we would expect a combination of Doppler shifts taking effect in the rest frame of the detector, one caused by the detector's motion relative to the laboratory frame, and another caused by the heat flow of the NESS. 
For $E=0.5$ we see that $\Rcal_{N,\vel}^{(\beta_L , \beta_R)} (E)$ peaks when the detector moves in the direction of the heat flow. 
Considering the isotropic response in the stationary case \textsf{(NESS-$0$)}, we can take the behavior of $\Rcal_{N,\vel}^{(\beta_L , \beta_R)}$ as an indicator of purely kinematical effects from the presence of the two heat baths. 
Notice that the asymmetry in $\vel$ subsides when both semi-infinite heat baths have the same temperature, i.e.\ when the field is in an inertial KMS state, for which the transition rate takes its maximum at $\vel=0$ for sufficiently small $E>0$ (see the discussion in \cite{Costa-Matsas-background1995,Costa-Matsas1995}).\medskip 

\begin{figure}[!t]
	\centering
	\begin{subfigure}{.5\textwidth}
		\centering
		\includegraphics[width=\textwidth]{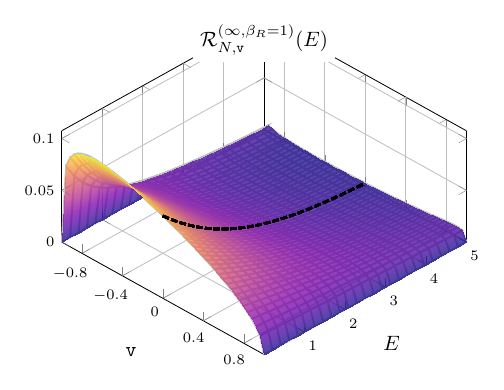}
		\caption{$(\vel,E)\mapsto\Rcal_{N,\vel}^{(\infty , \beta_R = 1)} (E)$}
		\label{fig:sub1-ness-moving-infty}
	\end{subfigure}%
	\begin{subfigure}{.5\textwidth}
		\centering
		\includegraphics[width=\textwidth]{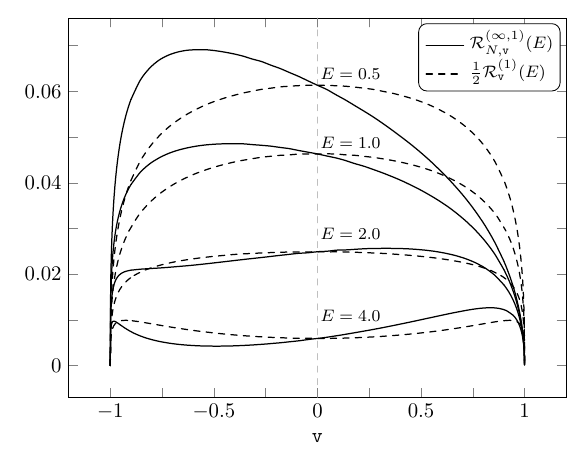}
		\caption{$\vel\mapsto\Rcal_{N,\vel}^{(\infty , \beta_R = 1)} (E)$ (solid), $\vel\mapsto\frac{1}{2} \Rcal_{\vel}^{(\beta=1)} (E)$ (dashed)}
		\label{fig:sub2-ness-moving-infty}
	\end{subfigure}
	\caption{\textbf{(a)} Surface plot of $(\vel,E)\mapsto \Rcal_{N,\vel}^{(\infty , \beta_R)} (E)$ for $\beta_R = 1$ and $\vel\in(-1,1)$, $E>0$. The dashed surface curve at $\vel=0$ represents the transition rate of a stationary detector and is given by $\frac{1}{2} \Rcal_{0}^{(\beta=1)}$. \quad \textbf{(b)} Plot of the same transition rate as a function of $\vel$ for different values of $E>0$ (solid lines). The corresponding rescaled transition rate $\frac{1}{2} \Rcal_{\vel}^{(\beta=1)}$ for the $(\beta=1)$-KMS heat bath alone (Eq.\ \eqref{eq:transition-rate-kms-moving}) is shown by dashed lines.}
	\label{fig:rate-ness-moving-infty}
\end{figure}

Let us now consider the limit case $\beta_L \to \infty$ for which the left reservoir approaches the ground state (acting as a perfect ``heat sink''). 
(The situation with $\beta_R \to \infty$ and finite $\beta_L$ is effectively obtained from this by inverting the direction of motion, $\vel\mapsto -\vel$.) 
Since $\lim_{\beta\to\infty} R(\beta,\vel,E) = 0$ for $E>0$ and $\vel\in(-1,1)\setminus\{0\}$ (see Eq.\ \eqref{eq:R-function}), the transition rate in this limit is given by $R(\beta_R , \vel , E)$, i.e.\
\begin{gather*}
	\Rcal_{N,\vel}^{(\infty , \beta_R)} (E) := \lim\limits_{\beta_L \to \infty} \Rcal_{N,\vel}^{(\beta_L , \beta_R)} (E) = \frac{\sqrt{1-\vel^2}}{4\pi\vel\beta_R} \ln\left(\frac{1-\e^{-\beta_R E \sqrt{1-\vel^2}}}{1-\e^{-\beta_R E \sqrt{\frac{1-\vel}{1+\vel}}}} \right) \, , \quad \beta_R , E>0 \, .
\end{gather*}
The plots in Figure \ref{fig:rate-ness-moving-infty} visualize this transition rate (for $\beta_R = 1$). 
For every fixed $\vel$ and $\beta_R$, the function $\Rcal_{N,\vel}^{(\infty , \beta_R)}$ is given by $\sqrt{1-\vel^2}(4\pi\vel\beta_R)^{-1} \ln(1+\vel)$ in the limit $E\to 0$ (see Eq.\ \eqref{eq:transition-rate-ness-vel-small-gap}), and is decreasing in $E>0$. 

We can contrast this transition rate with the scenario \textsf{(KMS-$\vel$)} in which the initial Minkowski vacuum state in the left half-space (and thus the heat flow to $x^1 = -\infty$) is absent and the detector interacts with a $\beta_R$-KMS state. 
To this end, we consider in Figure \ref{fig:sub2-ness-moving-infty} the rescaled transition rate $\frac{1}{2} \Rcal_{\vel}^{(\beta_R)}$ (Eq.\ \eqref{eq:transition-rate-kms-moving}), where the factor $\frac{1}{2}$ accommodates for the limit $\vel\to 0$ in which $\Rcal_{N,0}^{(\infty , \beta_R)} (E) = \frac{1}{2} \Rcal_{0}^{(\beta_R)} (E)$ for $E>0$ by Eq.\ \eqref{eq:transition-rate-ness-stat}. 
The relation between Figure \ref{fig:rate-ness-moving-infty} and Figure \ref{fig:sub1-kms-moving} shows the asymmetry in $\vel$ discussed above. 

For a comparison with the case \textsf{(NESS-$0$)} we examine the quotient $\Rcal_{N,\vel}^{(\infty , \beta_R)} (E) / \Rcal_{N,0}^{(\infty , \beta_R)} (E)$ as a function of $\vel\in(-1,1)$ parametrized by $\beta_R E > 0$ (see Figure \ref{fig:rate-ness-moving-infty2}). 
Similar to the quotient $\Rcal_{\vel}^{(\beta)} (E) / \Rcal_{0}^{(\beta)} (E)$ in the case \textsf{(KMS-$\vel$)}, which has been discussed in \cite{Costa2004}, we can find the ``critical value'' of $\beta_R E > 0$ that determines the change of slope in $\vel=0$ that occurs when passing from ``smaller'' to ``larger'' values of $\beta_R E$. 
\begin{corollary}
	\label{cor:ness-quotient}
	The quotient of the transition rates for \textup{\textsf{(NESS-$\vel$)}} and \textup{\textsf{(NESS-$0$)}} (with $E>0$ and $\beta_L \to \infty$) is given by
	\begin{gather*}
		\frac{\Rcal_{N,\vel}^{(\infty , \beta_R)} (E)}{\Rcal_{N,0}^{(\infty , \beta_R)} (E)} = \frac{\sqrt{1-\vel^2} (\e^{\beta_R E} - 1)}{\vel\beta_R E} \ln\left(\frac{1-\e^{-\beta_R E \sqrt{1-\vel^2}}}{1-\e^{-\beta_R E \sqrt{\frac{1-\vel}{1+\vel}}}} \right) \, , \quad \beta_R , E>0 \, .
	\end{gather*}
	In the limit $\beta_R E \to 0$ this equals $\sqrt{1-\vel^2} \, \vel^{-1} \ln(1+\vel)$ for every fixed $\vel$. 
	The quotient $\Rcal_{N,\vel}^{(\infty , \beta_R)} (E) / \Rcal_{N,0}^{(\infty , \beta_R)} (E)$ has vanishing first derivative in $\vel=0$ if and only if $\beta_R E$ takes the value
	\begin{gather*}
		c_N := 2 + \mathrm{W}(-2\e^{-2}) \approx 1.59362 \, ,
	\end{gather*}
	where $\mathrm{W}$ denotes the principal branch of the Lambert W-function \cite[4.13]{NIST:DLMF}.
\end{corollary}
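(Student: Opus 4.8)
The plan is to dispatch the two limiting statements by direct substitution and to reduce the derivative condition at $\vel=0$ to a transcendental equation that is then solved with the Lambert W-function. For the quotient formula I would divide the closed expression for $\Rcal_{N,\vel}^{(\infty,\beta_R)}(E)$ displayed immediately before the corollary by the ($\vel$-independent) denominator. By Proposition~\ref{prop:ness-stat}, and since $\Rcal_0^{(\beta_L)}(E)\to\Rcal_{\mathrm{vac}}(E)=0$ as $\beta_L\to\infty$ for $E>0$, one has $\Rcal_{N,0}^{(\infty,\beta_R)}(E)=\tfrac12\Rcal_0^{(\beta_R)}(E)=\frac{E}{4\pi(\e^{\beta_R E}-1)}$; cancelling the common factor $1/(4\pi)$ yields the stated quotient. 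For the $\beta_R E\to 0$ limit I would set $u:=\beta_R E$ and use $\e^u-1\sim u$ together with $1-\e^{-ua}\sim ua$ for each fixed $a>0$, so that the argument of the logarithm tends to $\sqrt{1-\vel^2}\big/\sqrt{(1-\vel)/(1+\vel)}=1+\vel$, which produces $\sqrt{1-\vel^2}\,\vel^{-1}\ln(1+\vel)$.

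The substance is the derivative condition. Writing $u:=\beta_R E$, I would abbreviate the quotient as
\begin{gather*}
	Q(\vel)=\frac{\e^u-1}{u}\,\frac{\sqrt{1-\vel^2}}{\vel}\,g(\vel),\qquad g(\vel):=\ln\!\bigl(1-\e^{-u a(\vel)}\bigr)-\ln\!\bigl(1-\e^{-u b(\vel)}\bigr),
\end{gather*}
with $a(\vel):=\sqrt{1-\vel^2}$ and $b(\vel):=\sqrt{(1-\vel)/(1+\vel)}$. Since $g$ is smooth near $\vel=0$ with $g(0)=0$, I write $g(\vel)=g_1\vel+g_2\vel^2+\mathcal{O}(\vel^3)$; because $\sqrt{1-\vel^2}=1-\tfrac12\vel^2+\mathcal{O}(\vel^4)$ is even, one gets $\tfrac{\sqrt{1-\vel^2}}{\vel}\,g(\vel)=g_1+g_2\vel+\mathcal{O}(\vel^2)$, so $Q(0)=\tfrac{\e^u-1}{u}g_1$ (which equals $1$, a useful consistency check) and $Q'(0)=\tfrac{\e^u-1}{u}g_2$. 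Hence $Q'(0)=0$ is equivalent to $g_2=0$, the prefactor being nonzero for $u>0$. Using $a(\vel)=1-\tfrac12\vel^2+\mathcal{O}(\vel^4)$ and $b(\vel)=1-\vel+\tfrac12\vel^2+\mathcal{O}(\vel^3)$ and Taylor-expanding $h(t):=\ln(1-\e^{-ut})$ about $t=1$, I expect $g_2=-\bigl(h'(1)+\tfrac12 h''(1)\bigr)$, so the condition collapses to $2h'(1)+h''(1)=0$.

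With $h'(t)=u/(\e^{ut}-1)$ and $h''(t)=-u^2\e^{ut}/(\e^{ut}-1)^2$, evaluating at $t=1$ and clearing the factor $(\e^u-1)^2/u$ reduces $2h'(1)+h''(1)=0$ to $2(\e^u-1)-u\e^u=0$, i.e. $\e^u(2-u)=2$. Setting $w:=\mathrm{W}(-2\e^{-2})$ and $u:=2+w$ gives $2-u=-w$ and $\e^u=\e^2\e^w$, whence $\e^u(2-u)=-\e^2 w\e^w=-\e^2(-2\e^{-2})=2$ by the defining relation $w\e^w=-2\e^{-2}$; thus $u=c_N$ is a root. To obtain the \emph{if and only if} I would establish uniqueness on $u>0$ from $F(u):=\e^u(2-u)-2$, noting $F(0)=0$, $F'(u)=\e^u(1-u)$, so that $F$ increases on $(0,1)$, decreases thereafter, and tends to $-\infty$; consequently $F>0$ on $(0,c_N)$ with a single positive zero at $u=c_N\approx1.59362$. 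This rules out both the spurious root $u=0$ and any second solution, giving $Q'(0)=0\iff\beta_R E=c_N$.

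The main obstacle is the careful bookkeeping in the second-order expansion: one must correctly combine the $1/\vel$ factor with the even factor $\sqrt{1-\vel^2}$ so that $Q'(0)$ is isolated as precisely the $\vel^2$-coefficient of $g$, and then ensure in the uniqueness step that only the principal-branch root (for which $2+w>0$) survives, excluding the $\mathrm{W}_{-1}$ branch value $w=-2$ that would spuriously return $u=0$.
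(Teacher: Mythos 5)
Your proposal is correct and follows essentially the same route as the paper: form the quotient from the explicit formula for $\Rcal_{N,\vel}^{(\infty,\beta_R)}$ and $\Rcal_{N,0}^{(\infty,\beta_R)}=\tfrac12\Rcal_0^{(\beta_R)}$, take the $\beta_R E\to 0$ limit by leading-order expansion, Taylor-expand in $\vel$ about $0$ (your coefficient $\tfrac{\e^u-1}{u}g_2$ reduces exactly to the paper's $\tfrac{\e^u(u-2)+2}{2(\e^u-1)}$), and solve $\e^u(2-u)=2$ via the Lambert W-function. Your explicit monotonicity argument for uniqueness of the positive root and the exclusion of the $\mathrm{W}_{-1}$ value $w=-2$ (which would give $u=0$) is a welcome elaboration of a step the paper only asserts with a citation.
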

\begin{proof}
	The expression for the quotient is a consequence of the previous results, and the limit $\beta_R E \to 0$ can be deduced from Eq.\ \eqref{eq:transition-rate-ness-vel-small-gap} and $\lim_{\beta_R E \to 0} \beta_R \Rcal_{N,0}^{(\infty , \beta_R)} (E) = \frac{1}{4\pi}$ by Eq.\ \eqref{eq:transition-rate-ness-stat}. 
	A calculation of the first two terms in the Taylor expansion around $\vel=0$ shows that
	\begin{gather*}
		\frac{\Rcal_{N,\vel}^{(\infty , \beta_R)} (E)}{\Rcal_{N,0}^{(\infty , \beta_R)} (E)} = 1 + \frac{\e^{\beta_R E} (\beta_R E - 2) + 2}{2(\e^{\beta_R E} - 1)} \, \vel + \mathcal{O}(\vel^2) \, .
	\end{gather*}
	From this we see that the first derivative of the quotient in $\vel=0$ vanishes if and only if $\e^{\beta_R E} (\beta_R E - 2) + 2 = 0$, which has the unique positive solution $\beta_R E = c_N$ as stated (see \cite[4.13]{NIST:DLMF}). 
\end{proof}
The plots in Figure \ref{fig:rate-ness-moving-infty2} again show the dependence on the sign of $\vel$. 
At $\beta_R E = c_N$ the quotient is smaller than $1$ for all $\vel\in(-1,1)\setminus\{0\}$. 
The critical value of $\beta E$ for the analogous quotient $\Rcal_{\vel}^{(\beta)} (E) / \Rcal_{0}^{(\beta)} (E)$ in the case \textsf{(KMS-$\vel$)} is determined by the vanishing second derivative in $\vel=0$ (note that the expansion of $\Rcal_{\vel}^{(\beta)}$ only depends on even powers of $\vel$ as a manifestation of symmetry \cite{Costa2004}) and is given by the solution to $\beta E \coth(\beta E/2)=3$, which is approximately $2.57568$ \cite{Costa2004}. 
Hence the critical value $c_N \approx 1.59362$ in the NESS case is about $38\%$ less the value in the KMS case. 
This indicates that, in comparison with the interaction with a KMS state, the presence of the vacuum state reservoir in the left half-space (and the NESS it entails at late times) has a non-trivial influence on the velocity-dependence of the transition rate. 

\begin{figure}[!t]
	\centering
	\includegraphics[width=.6\textwidth]{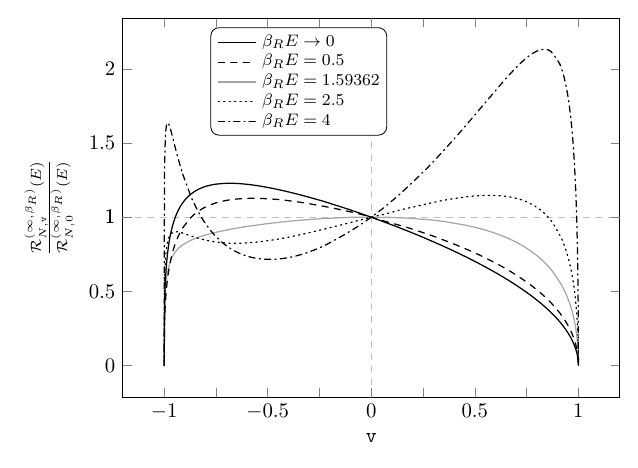}
	\caption{Plots of the quotient of $\Rcal_{N,\vel}^{(\infty , \beta_R)} (E)$ and $\Rcal_{N,0}^{(\infty , \beta_R)} (E) = \frac{1}{2} \Rcal_{0}^{(\beta_R)} (E)$ (for $E>0$) as a function of $\vel$ in the limit $\beta_R E \to 0$ (solid) and for $\beta_R E = 0.5$ (dashed), $2.5$ (dotted), $4$ (dash-dotted), and the critical value $2 + \mathrm{W}(-2\e^{-2}) \approx 1.59362$ (solid gray).}
	\label{fig:rate-ness-moving-infty2}
\end{figure}

\section{Detailed balance effective temperature}
\label{sec:detailed-balance}

The detailed balance condition \cite{Takagi1986,Waiting-for-Unruh} for the ratio of excitation and de-excitation probabilities of the detector allows to assign an effective temperature to the response of a detector \cite{Good-JA-Moustos-Temirkhan2020,Biermann-et-al2020,Bunney-Louko2023,Bunney-Parry-Perche-Louko2024,Parry-Louko2025} (see also \cite{Bell-Leinaas1983,Unruh1998}). 
Following these references, we define the effective temperature perceived by a monopole detector and discuss some of its features for inertially moving detectors coupled to the NESS $\omega_N$. 
\begin{definition}[Detailed balance effective temperature]
	\label{def:detailed-balance}
	For a detector with energy gap $E\in\R\setminus\{0\}$ and transition rate $\Rcal_\omega$ under the coupling to a field state $\omega$ (Eq.\ \eqref{eq:transition-rate}), the \emph{detailed balance effective temperature} is (provided it exists)
	\begin{gather*}
		\Teff(E) := \frac{E}{\ln\left(\frac{\Rcal_\omega (-E)}{\Rcal_\omega (E)}\right)} \, ,
	\end{gather*}
	i.e.\ $\Teff(E)$ is the quantity such that $\Rcal_\omega (E) = \e^{-E/\Teff(E)} \Rcal_\omega (-E)$.
\end{definition}
Notice that $\Teff(-E)=\Teff(E)$ for all $E\in\R\setminus\{0\}$. 
The detailed balance condition $\Rcal_\omega (E) = \e^{-\beta E} \Rcal_\omega (-E)$ for a constant $\beta>0$ is related to a form of the $\beta$-KMS condition for the two-point function pulled back along the detector's stationary worldline \cite{Waiting-for-Unruh}. 
If $\omega$ is a $\beta$-KMS state relative to the detector's rest frame (case \textsf{(KMS-$0$)}) then $\Teff(E)$ is independent of $E$ and coincides with the temperature $\beta^{-1}$, as can be seen by inserting $\Rcal_{0}^{(\beta)}$ from Eq.\ \eqref{eq:transition-rate-kms-rest}. 
In general, however, the detailed balance effective temperature $\Teff(E)$ will depend on the energy gap of the detector and thus does not represent an equilibrium temperature in the Gibbs sense (cf.\ \cite{Parry-Louko2025} for comments in that direction). 
While the significance and interpretation of $\Teff(E)$ as a thermodynamic quantity might be up to debate, it nevertheless can characterize the asymptotic readings of the detector for given $E$ under the interaction with the quantum field (see \cite{Good-JA-Moustos-Temirkhan2020,Biermann-et-al2020,Bunney-Louko2023,Bunney-Parry-Perche-Louko2024,Parry-Louko2025} for comments and some recent applications, and further references therein). 
In \cite{JuarezAubry-Moustos2019} it has been shown that for detectors moving along stationary trajectories (orbits of timelike Killing vector fields) and interacting with states for which the detailed balance condition $\Rcal_\omega (E) = \e^{-E/\Teff(E)} \Rcal_\omega (-E)$ holds, the reduced density matrix of the detector at asymptotically late times under Born-Markov approximation has a Gibbs form corresponding to the (possibly energy-dependent) effective temperature $\Teff(E)$.\medskip 

Let us denote the effective temperatures for the cases \textsf{(KMS-$\vel$)} and \textsf{(NESS-$\vel$)} by $\Teff_{\vel}^{(\beta)}$ and $\Teff_{N,\vel}^{(\beta_L , \beta_R)}$ for $\vel\in(-1,1)$, respectively. 
As noted above, $\Teff_{0}^{(\beta)} \equiv \beta^{-1}$ for every $\beta>0$. 
For $\vel\neq 0$, Eqs.\ \eqref{eq:transition-rate-kms-moving} \& \eqref{eq:transition-rate-kms-moving-small-gap} imply that 
\begin{gather}
	\label{eq:Tvbeta}
	\Teff_{\vel}^{(\beta)} (E) = E \left[ \ln\left( \frac{\ln\left(\frac{1-\e^{\beta \dop_+ E}}{1-\e^{\beta \dop_- E}} \right)}{\ln\left(\frac{1-\e^{-\beta \dop_+ E}}{1-\e^{-\beta \dop_- E}} \right)} \right) \right]^{-1} = \frac{\sqrt{1-\vel^2}}{2\beta\vel} \ln\left(\frac{1+\vel}{1-\vel}\right) + \mathcal{O}(E^2) \, .
\end{gather}
The lowest order of Eq.\ \eqref{eq:Tvbeta} coincides with \cite[Eq.\ (8)]{Landsberg-Matsas1996}, the solid angle average of a ``directional temperature''; see Appendix \ref{appendix:eff-temp} for some details. 
For small $|\vel|$ and $|E|$ the detailed balance temperature $\Teff_{\vel}^{(\beta)}$ is approximately 
\begin{gather}
	\label{eq:eff-temp-kms-slow}
	\Teff_{\vel}^{(\beta)} (E) \simeq \beta^{-1} \left(1-\frac{\vel^2}{6} \right) \, , \quad |\vel|\ll 1 \, , \, E\to 0 \, ,
\end{gather}
which is strictly smaller than the rest frame temperature $\beta^{-1}$ and has been given in \cite[Eq.\ (9)]{Costa-Matsas1995} as the effective temperature defined by a moving observer at small velocities and in the infrared regime of the radiation spectrum. 
We note that other spectral regimes and velocities relate to differing results, some of which may be higher than the rest frame temperature \cite{Costa-Matsas1995}. 
This is closely related to the argument in \cite{Costa-Matsas1995,Landsberg-Matsas1996,Landsberg-Matsas2004} against the existence of a relativistic transformation law for temperature (see also \cite{Farias-Pinto-Moya} and \cite[Sec.\ 7.7]{CasasVazquez-Jou2003} for a review of the various approaches and positions on this matter).\medskip 

We now turn to the NESS case. 
The following proposition summarizes the main properties of the detailed balance effective temperatures expanded in $E$, if one of the heat baths is replaced by the vacuum state ($\beta_L \to \infty$), and in the limit of small velocities. 
The results follow from Definition \ref{def:detailed-balance}, Eqs.\ \eqref{eq:transition-rate-ness-stat} \& \eqref{eq:transition-rate-ness-vel}, and expansions in small $|E|$ (see Eq.\ \eqref{eq:transition-rate-ness-vel-small-gap}). 
\begin{proposition}[Effective temperatures for inertial detectors coupled to NESS]
	\label{prop:eff-temp}
	Let $\Teff_{N,\vel}^{(\beta_L , \beta_R)}$ be the detailed balance effective temperature for \textup{\textsf{(NESS-$\vel$)}}, $\vel\in(-1,1)$, $\beta_L , \beta_R > 0$. 
	\begin{itemize}
		\item For \textup{\textsf{(NESS-$0$)}}, 
		\begin{gather}
			\label{eq:eff-temp-ness-0}
			\Teff_{N,0}^{(\beta_L , \beta_R)} (E) = E \left[\ln\left(\frac{2\e^{(\beta_L + \beta_R)E} - \e^{\beta_L E} - \e^{\beta_R E}}{\e^{\beta_L E} + \e^{\beta_R E} - 2}\right)\right]^{-1} = \frac{1}{2} (\beta_L^{-1} + \beta_R^{-1}) + \mathcal{O}(E^2) \, ,
		\end{gather}
		with $\Teff_{N,0}^{(\beta , \beta)} = \Teff_{0}^{(\beta)} \equiv \beta^{-1}$ for all $\beta>0$, and
		\begin{gather*}
			\Teff_{N,0}^{(\infty , \beta_R)} (E) := \lim\limits_{\beta_L \to \infty} \Teff_{N,0}^{(\beta_L , \beta_R)} (E) = \frac{|E|}{\ln(2\e^{\beta_R |E|} -1)} = \frac{1}{2\beta_R} + \frac{|E|}{4} + \mathcal{O}(E^2) \, .
		\end{gather*}
		\item For \textup{\textsf{(NESS-$\vel$)}} with $\vel\in(-1,1)\setminus\{0\}$, 
		\begin{gather}
			\label{eq:eff-temp-ness-v}
			\Teff_{N,\vel}^{(\beta_L , \beta_R)} (E) = \frac{\sqrt{1-\vel^2}}{2\vel} \left( \frac{1}{\beta_R} \ln(1+\vel) - \frac{1}{\beta_L} \ln(1-\vel) \right) + \mathcal{O}(E^2)
		\end{gather}
		and
		\begin{gather}
			\label{eq:eff-temp-ness-v-infty}
			\Teff_{N,\vel}^{(\infty , \beta_R)} (E) := \lim\limits_{\beta_L \to \infty} \Teff_{N,\vel}^{(\beta_L , \beta_R)} (E) = \frac{|E|}{\ln\left(1+\frac{|E|}{2\pi R(\beta_R , \vel , |E|)}\right)}
		\end{gather}
		for the function $R$ defined in Eq.\ \eqref{eq:R-function}. 
		For slowly moving detectors only probing the infrared (low energy) regime, 
		\begin{gather}
			\label{eq:Teff-N-v}
			\Teff_{N,\vel}^{(\beta_L , \beta_R)} (E) \simeq \frac{1}{2} (\beta_L^{-1} + \beta_R^{-1}) + \frac{\beta_R - \beta_L}{4\beta_L \beta_R} \vel - \frac{\beta_L + \beta_R}{12\beta_L \beta_R} \vel^2 \, , \quad |\vel|\ll 1 \, , \, E\to 0 \, .
		\end{gather}
	\end{itemize}
\end{proposition}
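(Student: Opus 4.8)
\section*{Proof proposal}

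The plan is to obtain every assertion directly from Definition \ref{def:detailed-balance}, inserting the transition rates already established in Propositions \ref{prop:ness-stat} and \ref{prop:ness-v}, and then reading off the stated limits and expansions. All of the work reduces to elementary manipulations of the quotient $\Rcal_\omega(-E)/\Rcal_\omega(E)$ and its logarithm, so no new analytic input is needed; throughout I use $\Teff(-E)=\Teff(E)$ to restrict to $E>0$ wherever convenient.

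For the stationary case \textsf{(NESS-$0$)} I would start from Eq.\ \eqref{eq:transition-rate-ness-stat} with the Planckian form \eqref{eq:transition-rate-kms-rest}, writing $\Rcal_{N,0}^{(\beta_L,\beta_R)}(E)=\frac{E}{4\pi}\bigl((\e^{\beta_L E}-1)^{-1}+(\e^{\beta_R E}-1)^{-1}\bigr)$. Setting $a=\e^{\beta_L E}$, $b=\e^{\beta_R E}$ and using $(\e^{-\beta E}-1)^{-1}=-\e^{\beta E}(\e^{\beta E}-1)^{-1}$, the quotient $\Rcal_{N,0}^{(\beta_L,\beta_R)}(-E)/\Rcal_{N,0}^{(\beta_L,\beta_R)}(E)$ collapses to $(2ab-a-b)/(a+b-2)$, which is exactly the argument of the logarithm in Eq.\ \eqref{eq:eff-temp-ness-0}. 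The special case $\beta_L=\beta_R=\beta$ follows since this quotient then reduces to $\e^{\beta E}$, giving $\Teff\equiv\beta^{-1}$; the leading order $\frac{1}{2}(\beta_L^{-1}+\beta_R^{-1})$ comes from expanding numerator and denominator to second order in $E$ (both vanish to first order at $E=0$) and taking the logarithm. The $\beta_L\to\infty$ limit is obtained by factoring out $\e^{\beta_L E}$ for $E>0$, so that the quotient tends to $2\e^{\beta_R E}-1$; the expansion $\frac{1}{2\beta_R}+\frac{|E|}{4}$ is then a one-line Taylor series of $|E|/\ln(2\e^{\beta_R|E|}-1)$.

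For the moving case \textsf{(NESS-$\vel$)} the leading-in-$E$ formula \eqref{eq:eff-temp-ness-v} is immediate from the small-gap expansion \eqref{eq:transition-rate-ness-vel-small-gap}: writing $\Rcal_{N,\vel}^{(\beta_L,\beta_R)}(\pm E)=A\mp\frac{E}{4\pi}+\mathcal{O}(E^2)$ with $A$ the velocity-dependent constant appearing there, the quotient is $1+\frac{E}{2\pi A}+\mathcal{O}(E^2)$, whose logarithm gives $\Teff=2\pi A+\mathcal{O}(E^2)$, i.e.\ Eq.\ \eqref{eq:eff-temp-ness-v}. The slow-motion low-energy expansion \eqref{eq:Teff-N-v} is then obtained by substituting $\ln(1\pm\vel)=\pm\vel-\tfrac{\vel^2}{2}\pm\cdots$ and $\sqrt{1-\vel^2}=1-\tfrac{\vel^2}{2}+\cdots$ into Eq.\ \eqref{eq:eff-temp-ness-v} and collecting terms through order $\vel^2$.

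The main obstacle — and the only step requiring genuine care — is the limit formula \eqref{eq:eff-temp-ness-v-infty} (abbreviating $\Rcal\equiv\Rcal_{N,\vel}^{(\infty,\beta_R)}$). Fixing $E>0$, the contribution $R(\beta_L,-\vel,E)$ to $\Rcal(E)$ vanishes as $\beta_L\to\infty$ (as recorded after Eq.\ \eqref{eq:R-function}), leaving $\Rcal(E)=R(\beta_R,\vel,E)$. For $\Rcal(-E)$, however, $R(\beta_L,-\vel,-E)$ does \emph{not} vanish: factoring the dominant exponentials in its logarithm and using $\sqrt{1-\vel^2}-\sqrt{\tfrac{1+\vel}{1-\vel}}=-\vel\sqrt{\tfrac{1+\vel}{1-\vel}}$ shows $\lim_{\beta_L\to\infty}R(\beta_L,-\vel,-E)=\frac{(1+\vel)E}{4\pi}$. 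The remaining piece is pinned down by the algebraic identity $R(\beta,\vel,-E)-R(\beta,\vel,E)=\frac{(1-\vel)E}{4\pi}$, which I would verify by combining the two logarithms in Eq.\ \eqref{eq:R-function} and using $(1-\e^{x})/(1-\e^{-x})=\e^{x}$ together with $\sqrt{1-\vel^2}\,\sqrt{\tfrac{1-\vel}{1+\vel}}=1-\vel$. Adding the two pieces yields $\Rcal(-E)=R(\beta_R,\vel,E)+\frac{E}{2\pi}$, so the quotient equals $1+\frac{E}{2\pi R(\beta_R,\vel,E)}$, and inserting this into Definition \ref{def:detailed-balance} gives Eq.\ \eqref{eq:eff-temp-ness-v-infty}. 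Keeping track of the signs of $\vel$ and $E$ in these exponential limits is the part most prone to error.
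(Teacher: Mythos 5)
Your proposal is correct and follows exactly the route the paper intends: the paper's own ``proof'' is a one-line remark that everything follows from Definition \ref{def:detailed-balance} together with Eqs.\ \eqref{eq:transition-rate-ness-stat}, \eqref{eq:transition-rate-ness-vel} and the small-$|E|$ expansion \eqref{eq:transition-rate-ness-vel-small-gap}, and you have filled in precisely those computations, including the only delicate step (the two separate $\beta_L\to\infty$ limits building up $\Rcal(-E)=R(\beta_R,\vel,E)+\tfrac{E}{2\pi}$), all of which check out. Two cosmetic points: the auxiliary identity should read $(1-\e^{x})/(1-\e^{-x})=-\e^{x}$, the sign being harmless here since it enters numerator and denominator of a ratio; and from ``quotient $=1+\tfrac{E}{2\pi A}+\mathcal{O}(E^2)$'' alone one only gets $\Teff=2\pi A+\mathcal{O}(E)$, so to justify the claimed $\mathcal{O}(E^2)$ in \eqref{eq:eff-temp-ness-v} you should either invoke the evenness $\Teff(-E)=\Teff(E)$ (which kills the odd-order term) or use the exact relation $\Rcal(-E)=\Rcal(E)+\tfrac{E}{2\pi}$ together with the symmetric form $\Rcal(\pm E)=A\mp\tfrac{E}{4\pi}+\mathcal{O}(E^2)$, under which the $E^2$ contribution to the logarithm cancels.
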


The functional properties of $\Teff_{N,0}^{(\beta_L , \beta_R)}$ are illustrated in Figures \ref{fig:temp-ness0-3} \& \ref{fig:temp-ness0}. 
Once again, we see that the response of a stationary detector in the NESS does not correspond to a thermal equilibrium in the Gibbs sense (cf.\ the paragraph below Proposition \ref{prop:ness-stat}). 
Note that for $\beta_L > \beta_R$ and $E>0$, 
\begin{gather*}
	\lim\limits_{E\to \infty} \frac{1}{E} \ln\left(\frac{2\e^{(\beta_L + \beta_R)E} - \e^{\beta_L E} - \e^{\beta_R E}}{\e^{\beta_L E} + \e^{\beta_R E} - 2}\right) = \lim\limits_{E\to \infty} \frac{\ln(2\e^{\beta_R E} - 1)}{E} = \beta_R
\end{gather*}
by continuity, which means that for large $|E|$ the hotter heat bath dominates in the effective temperature: 
\begin{gather*}
	\lim\limits_{|E|\to \infty} \Teff_{N,0}^{(\beta_L , \beta_R)} (E) = \frac{1}{\min(\beta_L , \beta_R)}
\end{gather*}
Hence for $|E|\to\infty$ the effective temperature $\Teff_{N,0}^{(\infty , \beta_R)} (E)$ approaches $\beta_R^{-1}$, while for $E\to 0$ it approaches $(2\beta_R)^{-1}$.\medskip 

\begin{figure}[!t]
	\centering
	\includegraphics[width=\textwidth]{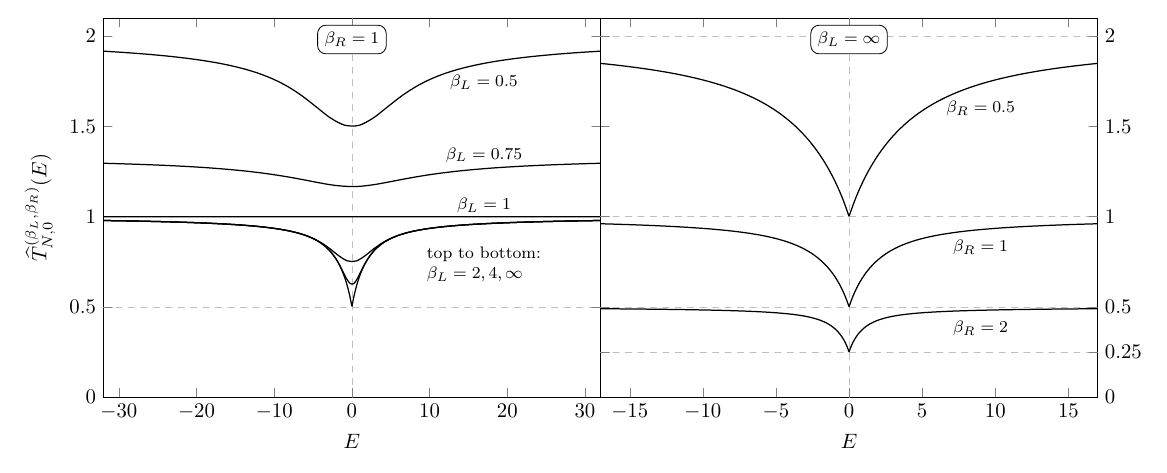}
	\caption{Plot of the effective temperature $\Teff_{N,0}^{(\beta_L , \beta_R)}$, for $\beta_R = 1$ and different values of $\beta_L$ (left panel), and for $\beta_L = \infty$ and different values of $\beta_R$ (right panel). This illustrates the limits $\lim_{E\to 0} \Teff_{N,0}^{(\beta_L , \beta_R)} (E) = (\beta_L^{-1} + \beta_R^{-1})/2$ and $\lim_{|E|\to \infty} \Teff_{N,0}^{(\beta_L , \beta_R)} (E) = (\min(\beta_L , \beta_R))^{-1}$, with the thermal case $\Teff_{N,0}^{(\beta , \beta)} = \Teff_{0}^{(\beta)} \equiv \beta^{-1}$ for $\beta > 0$.}
	\label{fig:temp-ness0-3}
\end{figure}
\begin{figure}[!t]
	\centering
	\includegraphics[width=.6\textwidth]{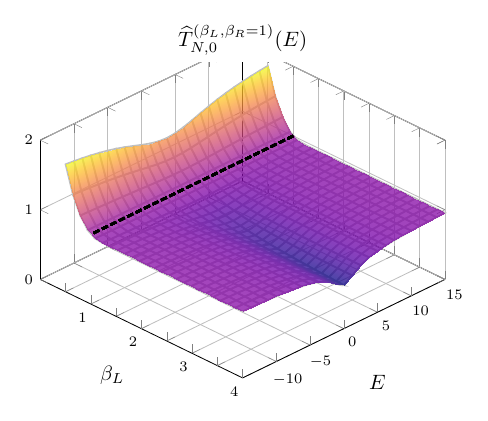}
	\caption{Surface plot of $(\beta_L , E)\mapsto\Teff_{N,0}^{(\beta_L , \beta_R)} (E)$ for $\beta_R = 1$ and $\beta_L > 0.5$, corresponding to the left panel in Figure \ref{fig:temp-ness0-3}. The dashed surface curve at $\beta_L = \beta_R = 1$ represents the constant KMS heat bath temperature $\Teff_{0}^{(\beta=1)} = 1$ (as a consequence of $\Rcal_{N,0}^{(\beta , \beta)} = \Rcal_{0}^{(\beta)}$).}
	\label{fig:temp-ness0}
\end{figure}

\begin{figure}[!t]
	\centering
	\includegraphics[width=.6\textwidth]{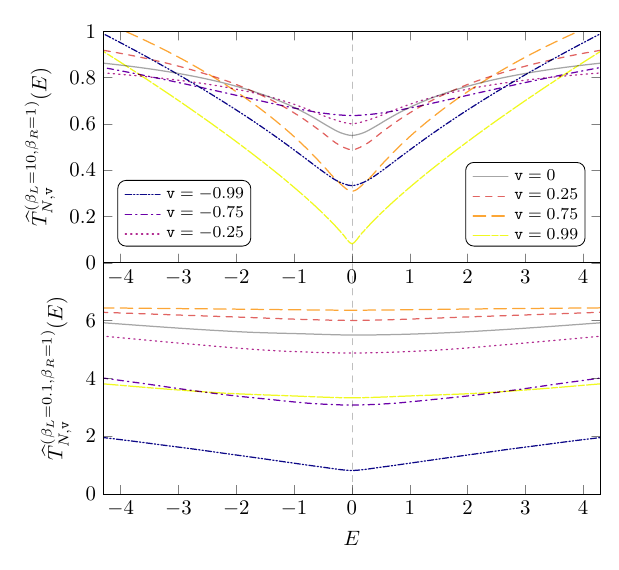}
	\caption{Plot of the effective temperature $\Teff_{N,\vel}^{(\beta_L , \beta_R)}$ for $\beta_R = 1$, and $\beta_L = 10$ (top panel) and $\beta_L = 0.1$ (bottom panel), for different values of $\vel$ as indicated in the legend. The zero velocity limit $\Teff_{N,0}^{(\beta_L , \beta_R)}$ is shown by the solid gray curves.}
	\label{fig:temp-nessv-2d}
\end{figure}

\begin{figure}[!t]
	\centering
	\begin{subfigure}{.5\textwidth}
		\centering
		\includegraphics[width=\textwidth]{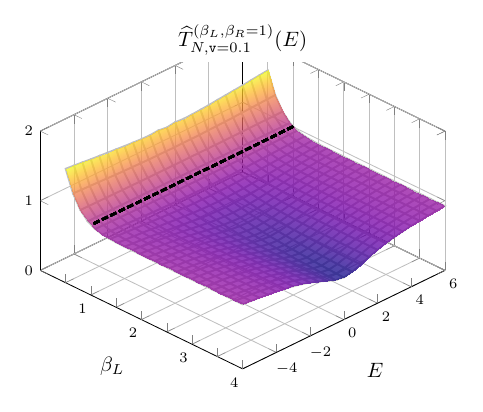}
		\caption{$(\beta_L , E) \mapsto \Teff_{N,\vel=0.1}^{(\beta_L , \beta_R = 1)} (E)$ ($\beta_L > 0.5$)}
		\label{fig:temp-nessv-1}
	\end{subfigure}%
	\begin{subfigure}{.5\textwidth}
		\centering
		\includegraphics[width=\textwidth]{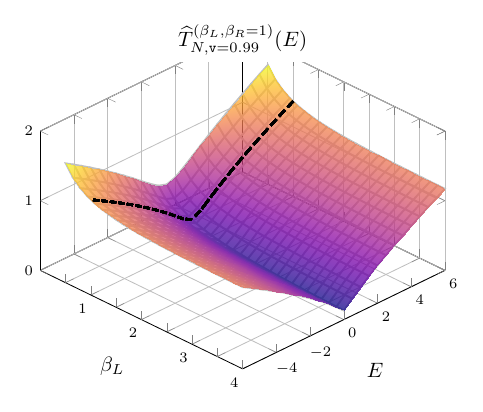}
		\caption{$(\beta_L , E) \mapsto \Teff_{N,\vel=0.99}^{(\beta_L , \beta_R = 1)} (E)$ ($\beta_L > 0.5$)}
		\label{fig:temp-nessv-2}
	\end{subfigure}
	\caption{Surface plots of the effective temperature $(\beta_L , E)\mapsto\Teff_{N,\vel}^{(\beta_L , \beta_R)} (E)$ for $\beta_R = 1$, $\beta_L > 0.5$, and $\vel=0.1$ (left), $0.99$ (right). The dashed surface curves at $\beta_L = \beta_R = 1$ correspond to $E\mapsto \Teff_{\vel}^{(\beta=1)} (E)$ given by Eq.\ \eqref{eq:Tvbeta} (as a consequence of $\Rcal_{N,\vel}^{(\beta , \beta)} = \Rcal_{\vel}^{(\beta)}$).}
	\label{fig:temp-nessv}
\end{figure}

Figure \ref{fig:temp-nessv-2d} shows the dependence of $\Teff_{N,\vel}^{(\beta_L , \beta_R)}$ on $\vel$ and the KMS parameters of the semi-infinite heat baths in a neighborhood of $E=0$. 
For concreteness, we have chosen $\beta_R = 1$ and $\beta_L \in \{0.1,10\}$. 
(The properties of the limit $\beta_L \to \infty$, given by Eq.\ \eqref{eq:eff-temp-ness-v-infty}, will not be studied here.) 
One observes that for fixed KMS parameters the effective temperature may be higher or lower than $\Teff_{N,0}^{(\beta_L , \beta_R)}$ (the solid gray curves in the figure) for certain velocities, and it is not an even function of $\vel$. 
Moreover, the relationship between the effective temperatures at different velocities changes depending on the KMS parameters, more specifically on whether $\beta_L$ is larger or smaller than $\beta_R$. 
We see that the effective temperature $\Teff_{N,\vel}^{(\beta_L , \beta_R)}$ near $E=0$ exceeds $\Teff_{N,0}^{(\beta_L , \beta_R)}$ when the detector moves in the direction of the heat flow, except for $|\vel|=0.99$ near the speed of light. 
(Recall from Eq.\ \eqref{eq:inertial-worldline} that the detector moves from left to right towards $x^1 = +\infty$ for $\vel>0$, and reverse for $\vel<0$.) 
Figure \ref{fig:temp-nessv} illustrates the dependence of $\Teff_{N,\vel}^{(\beta_L , \beta_R)}$ on $\beta_L$ in a surface plot for $\beta_R = 1$ and arbitrarily chosen velocities $\vel\in\{0.1,0.99\}$. 

Compared to \textsf{(KMS-$\vel$)} (Eq.\ \eqref{eq:Tvbeta}), the behavior of $\Teff_{N,\vel}^{(\beta_L , \beta_R)} (E)$ in $\vel$ for small $|E|$ (see Eq.\ \eqref{eq:eff-temp-ness-v}) is similar to the transition rates shown in the left plot of Figure \ref{fig:rate-ness-moving}: 
Unless $\beta_L = \beta_R$, there is an asymmetry in $\vel$ and a maximum value of $\Teff_{N,\vel}^{(\beta_L , \beta_R)} (E)$ for some $|\vel|>0$ depending on $\sgn(\beta_L - \beta_R)$. 
For slowly moving detectors only probing the infrared regime (i.e.\ $|\vel|\ll 1$, $E\to 0$), we notice that $\Teff_{\vel}^{(\beta)}$ given in Eq.\ \eqref{eq:eff-temp-kms-slow} is strictly smaller than the corresponding rest frame temperature $\beta^{-1}$. 
By contrast, the right-hand side of Eq.\ \eqref{eq:Teff-N-v} is larger than $\lim_{E\to 0} \Teff_{N,0}^{(\beta_L , \beta_R)} (E) = \frac{1}{2} (\beta_L^{-1} + \beta_R^{-1})$ (Eq.\ \eqref{eq:eff-temp-ness-0}) for positive (negative) $\vel$ with $|\vel|\ll 1$ when $\beta_L < \beta_R$ ($\beta_L > \beta_R$).

\section{The comoving frame}
\label{sec:case-NESS-comoving}

In this section we introduce the comoving frame of the NESS as a proposal for a reference frame that comes closest to the rest frame of a thermal equilibrium state and thus could be a suitable frame for the comparison of detector responses in future investigations. 

For $\beta_R > \beta_L > 0$ there is a heat flow from $x^1 = -\infty$ to $x^1 = \infty$ in the NESS described by $\omega_N$ (and in the reverse direction for $0<\beta_R < \beta_L$) \cite{DLSB,Hack-Verch}. 
The idea is to consider the special case of \ref{item:NESS} in which the detector couples to the NESS in a reference frame where the heat flow is ``at rest'', i.e.\ the detector is comoving with the heat flow at a certain constant velocity that depends on $\beta_L , \beta_R$. 
This frame is characterized by the property that the expected stress-energy tensor of the field is diagonalized, in analogy to the (local) rest frame of a perfect fluid (which is fully described by a non-zero energy density and isotropic pressure \cite{Schutz-GR}). 
The heat flow is thereby rendered as a velocity effect that is compensated kinematically by boosting to the comoving rest frame of the flow. 
A stationary detector in this frame is at rest relative to the ``medium'' of the NESS, similar to the thermal case \hyperref[item:KMS]{\textsf{(KMS-$0$)}}. 
We argue that the detector behavior should be studied relative to this frame, in the spirit of the (widely supported) guiding principle that thermal properties of systems ought to be determined by distinguished quantities defined relative to their rest frame; see, e.g., \cite{Cavalleri-Salgarelli,Callen-Horwitz,Costa-Matsas1995,Landsberg-Matsas1996,Landsberg-Matsas2004,Sewell2008,Sewell-rep2009,Sewell2010,PV-disj}. 
This also constitutes a fundamental premise for the definition of local thermal equilibrium states of quantum fields \cite{Buchholz-Ojima-Roos}. \medskip

The (renormalized) stress-energy tensor of the massless scalar field $\phi$ is given by (see, e.g., \cite{Birrell-Davies,Wald-book,Khavkine-Moretti})
\begin{gather}
	\label{eq:set}
	T_{\mu\nu} = \normord{\partial_\mu \phi \, \partial_\nu \phi - \frac{1}{2} \eta_{\mu\nu} \partial_\rho \phi \, \partial^\rho \phi} \, ,
\end{gather}
where $\eta$ is the Minkowski metric tensor, the colons denote normal ordering with respect to the vacuum state, and Einstein's summation convention is implied. 
Using Eq.\ \eqref{eq:ness-two-point-reg} we show in Lemma \ref{lem:set-ness-components} (rederiving results from \cite{DLSB}) that the expected stress-energy tensor in the NESS $\omega_N$ (constructed in the inertial frame $I$) has block diagonal form, namely
\begin{gather*}
	(\Tcal_{\mu\nu}) := (\omega_N (T_{\mu\nu})) = \begin{pmatrix} \widetilde{\Tcal} & \nullmatrix_2 \\ \nullmatrix_2 &  \diag (\Tcal_{\perp},\Tcal_{\perp}) \end{pmatrix} \, ,
\end{gather*}
where $\nullmatrix_2$ is the $2\times 2$ zero matrix, and
\begin{gather*}
	\widetilde{\Tcal} := \begin{pmatrix} \Tcal_{00} & \Tcal_{01} \\ \Tcal_{01} & \Tcal_{11} \end{pmatrix} \, , \quad \Tcal_{\perp} := \Tcal_{22} = \Tcal_{33}
\end{gather*}
for non-zero expected energy current density $\Tcal_{01}$ \cite{DLSB}. 

\begin{proposition}[Comoving frame]
	\label{prop:comoving}
	Let $\beta_L , \beta_R >0$, $\beta_L \neq \beta_R$. 
	There exists a Lorentz boost in the $x^1$-direction that diagonalizes the expected stress-energy tensor in the NESS $\omega_N$. 
	The corresponding velocity $\vel_N (\beta_L , \beta_R)$ of the boost is given by
	\begin{gather*}
		\vel_N (\beta_L , \beta_R) = \kappa(\beta_L , \beta_R) - \sgn(\beta_R - \beta_L) \sqrt{\kappa(\beta_L , \beta_R)^2 -1} \, , \quad \kappa(\beta_L , \beta_R) := \frac{4}{3} \, \frac{\beta_R^4 + \beta_L^4}{\beta_R^4 - \beta_L^4} \, ,
	\end{gather*}
	which can be written as 
	\begin{gather*}
		\vel_N (\beta_L , \beta_R) = \frac{4(\beta_R^4 + \beta_L^4)-\sqrt{(\beta_R^4 + 7\beta_L^4)(\beta_L^4 + 7\beta_R^4)}}{3(\beta_R^4 - \beta_L^4)}
	\end{gather*}
	and satisfies the bound $|\vel_N (\beta_L , \beta_R)| < \frac{4-\sqrt{7}}{3} \approx 0.4514$. 
	The diagonalized expected stress-energy tensor is not isotropic, i.e.\ its $(11)$ and $(22)$ components are distinct.
\end{proposition}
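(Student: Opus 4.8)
The plan is to reduce everything to the $2\times2$ block $\widetilde{\Tcal}$, since a Lorentz boost along $x^1$ fixes the transverse indices and therefore leaves $\Tcal_\perp=\Tcal_{22}=\Tcal_{33}$ untouched. First I would take the explicit components from Lemma~\ref{lem:set-ness-components}, which have the form $\Tcal_{00}=a\,(\beta_L^{-4}+\beta_R^{-4})$, $\Tcal_{11}=b\,(\beta_L^{-4}+\beta_R^{-4})$ and $\Tcal_{01}=c\,(\beta_L^{-4}-\beta_R^{-4})$ with definite positive constants $a,b,c$, the energy current $\Tcal_{01}$ being precisely the part odd under $\beta_L\leftrightarrow\beta_R$. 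The masslessness (tracelessness) already fixes $\Tcal_\perp=\tfrac12(\Tcal_{00}-\Tcal_{11})$, a fact I will reuse at the end.

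Next I would write the transformation of the covariant block under a boost of rapidity $\psi$ (with $\vel=\tanh\psi$) as the congruence $\widetilde{\Tcal}\mapsto L^{\top}\widetilde{\Tcal}\,L$, where $L=\left(\begin{smallmatrix}\cosh\psi & -\sinh\psi\\ -\sinh\psi & \cosh\psi\end{smallmatrix}\right)$, and read off the new off-diagonal entry as $\Tcal_{01}\cosh(2\psi)-\tfrac12(\Tcal_{00}+\Tcal_{11})\sinh(2\psi)$. Setting this to zero gives $\tanh(2\psi)=2\Tcal_{01}/(\Tcal_{00}+\Tcal_{11})$, which in terms of $\vel$ is the quadratic $\vel^2-2\kappa\vel+1=0$ with $\kappa:=(\Tcal_{00}+\Tcal_{11})/(2\Tcal_{01})$. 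Substituting the components and cancelling the common prefactor collapses the constants to the numerical ratio $\tfrac43$, and rewriting $(\beta_L^{-4}+\beta_R^{-4})/(\beta_L^{-4}-\beta_R^{-4})=(\beta_R^4+\beta_L^4)/(\beta_R^4-\beta_L^4)$ reproduces exactly $\kappa=\tfrac43(\beta_R^4+\beta_L^4)/(\beta_R^4-\beta_L^4)$.

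The quadratic has reciprocal roots $\vel_\pm=\kappa\pm\sqrt{\kappa^2-1}$. Since $(\beta_R^4+\beta_L^4)/|\beta_R^4-\beta_L^4|>1$ one has $|\kappa|>1$, so exactly one root lies in $(-1,1)$; a short sign check ($\kappa>1$ when $\beta_R>\beta_L$, $\kappa<-1$ when $\beta_R<\beta_L$, together with $\sgn(\beta_R-\beta_L)=\sgn(\beta_R^4-\beta_L^4)$) identifies it as $\vel_N=\kappa-\sgn(\beta_R-\beta_L)\sqrt{\kappa^2-1}$, with the comoving frame correctly moving in the direction of the heat flow. The second closed form then follows from the factorization $16(\beta_R^4+\beta_L^4)^2-9(\beta_R^4-\beta_L^4)^2=(\beta_R^4+7\beta_L^4)(\beta_L^4+7\beta_R^4)$ and $\sgn(\beta_R-\beta_L)/|\beta_R^4-\beta_L^4|=1/(\beta_R^4-\beta_L^4)$. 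For the bound I would view $\vel_N$ as a function of $\kappa\in(1,\infty)$ (the case $\kappa<-1$ following from the $\beta_L\leftrightarrow\beta_R$ antisymmetry $\vel_N\mapsto-\vel_N$); the map $\kappa\mapsto\kappa-\sqrt{\kappa^2-1}$ is strictly decreasing, so $|\vel_N|$ is maximized at the infimum of $|\kappa|$, and since $(\beta_R^4+\beta_L^4)/|\beta_R^4-\beta_L^4|\downarrow 1$ (but never attains $1$) as the temperature ratio degenerates, $|\kappa|>\tfrac43$ strictly, giving the strict bound $|\vel_N|<\tfrac43-\sqrt{\tfrac{16}{9}-1}=(4-\sqrt{7})/3$.

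Finally, for non-isotropy I would use the two boost-invariants of the block: the trace $\Tcal_{00}-\Tcal_{11}-2\Tcal_\perp=0$ and the determinant $\rho'p_1'=\Tcal_{00}\Tcal_{11}-\Tcal_{01}^2$ (with $\det L=1$), where $\rho',p_1'$ are the diagonalized energy density and longitudinal pressure and $p_\perp=\Tcal_\perp$ is unchanged. Assuming isotropy $p_1'=\Tcal_\perp$ and eliminating $\rho',\Tcal_\perp$ via the trace relation, the identity reduces to $\tfrac34\Tcal_{00}^2-\tfrac52\Tcal_{00}\Tcal_{11}+\tfrac34\Tcal_{11}^2+\Tcal_{01}^2=0$; inserting $\Tcal_{11}=\tfrac13\Tcal_{00}$ (from the explicit constants) makes the $\Tcal_{00}^2$-terms cancel identically, leaving $\Tcal_{01}^2=0$, i.e.\ $\beta_L=\beta_R$ — a contradiction, so the $(11)$ and $(22)$ components differ. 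I expect the main obstacle to be the bookkeeping in the middle steps: verifying that the constants from Lemma~\ref{lem:set-ness-components} combine to give \emph{exactly} $\kappa=\tfrac43(\beta_R^4+\beta_L^4)/(\beta_R^4-\beta_L^4)$, and carrying out the branch selection so that the physically admissible root ($|\vel_N|<1$, along the flow) is picked with the correct $\sgn(\beta_R-\beta_L)$; the strictness of the velocity bound (supremum approached but not attained) also requires a small limiting argument rather than a direct evaluation.
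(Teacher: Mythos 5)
Your proposal is correct in substance and, for the existence/formula/bound part, follows essentially the same route as the paper: restrict to the $2\times2$ block, demand that the boosted off-diagonal entry vanish, obtain the reciprocal-root quadratic $\vel^2-2\kappa\vel+1=0$, select the unique root in $(-1,1)$ by the sign of $\kappa$, and deduce the bound from $|\kappa|>\tfrac43$ (strict) plus monotonicity of $\kappa\mapsto\kappa-\sqrt{\kappa^2-1}$; the paper phrases the off-diagonal condition as the vanishing of the $(01)$ entry of $\widetilde{\Uplambda}^{-1}\widetilde{\Tcal}\,\widetilde{\Uplambda}^{-1}$ and arrives at the same quadratic. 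One bookkeeping caveat, which you yourself flagged as the delicate point: Lemma \ref{lem:set-ness-components} gives $\Tcal_{01}=\tfrac{\pi^2}{120}(\beta_R^{-4}-\beta_L^{-4})$, i.e.\ the opposite sign to your ansatz $\Tcal_{01}=c\,(\beta_L^{-4}-\beta_R^{-4})$ with $c>0$; correspondingly the paper's $\kappa$ equals $-(\Tcal_{00}+\Tcal_{11})/(2\Tcal_{01})$ rather than your $+(\Tcal_{00}+\Tcal_{11})/(2\Tcal_{01})$, the discrepancy coming from the fact that your congruence $L^{\top}\widetilde{\Tcal}L$ with $\vel=\tanh\psi$ transforms covariant components into the frame moving with velocity $-\vel$. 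These two sign flips cancel, so your final $\kappa$, $\vel_N$ and branch selection agree with the proposition, but each intermediate step is off by a sign as written and should be fixed. Where you genuinely depart from the paper is the non-isotropy claim: you use the two boost invariants of the block (its $\eta$-trace and its determinant, $\det L=1$) together with tracelessness to reduce the isotropy hypothesis to $\tfrac34\Tcal_{00}^2-\tfrac52\Tcal_{00}\Tcal_{11}+\tfrac34\Tcal_{11}^2+\Tcal_{01}^2=0$, which with $\Tcal_{11}=\tfrac13\Tcal_{00}$ collapses to $\Tcal_{01}=0$, i.e.\ $\beta_L=\beta_R$ --- a contradiction. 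The paper instead computes the boosted $(11)$ component explicitly as $\gam^2(\Tcal_{11}+2\Tcal_{01}\vel+\Tcal_{00}\vel^2)$, finds it equals $\Tcal_{22}$ only at $\vel=0$ or $\vel=1/\kappa$, and then checks $0<|\vel_N|<|1/\kappa|$. Your invariant argument buys you independence from the explicit boosted components and from any comparison of $\vel_N$ with $1/\kappa$; the paper's direct computation buys the extra piece of information of exactly which velocities would restore isotropy. Both are valid.
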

The reference frame moving along the $x^1$-axis with velocity $\vel_N (\beta_L , \beta_R)$ relative to $I$ is called the \emph{comoving frame} of the NESS. 
\begin{proof}
	Let $\Uplambda=({\Uplambda^{\mu}}_{\nu})$ be a Lorentz boost in $x^1$-direction given by
	\begin{gather}
		\label{eq:boost-x1-direction}
		\Uplambda = \begin{pmatrix} \widetilde{\Uplambda} & \nullmatrix_2 \\ \nullmatrix_2 & \mathds{1}_2 \end{pmatrix} \, , \quad \widetilde{\Uplambda} = \begin{pmatrix} \gam & -\gam\vel \\ -\gam\vel & \gam \end{pmatrix} \, ,
	\end{gather}
	where $\mathds{1}_2$ is the $2\times 2$ unit matrix, and $\vel\in(-1,1)$. 
	It holds ${\Uplambda_{\mu}}^{\rho} {\Uplambda_{\nu}}^{\sigma} \, \Tcal_{\rho\sigma} = 0$ for all $\mu\neq\nu$ if and only if the off-diagonal component of $\widetilde{\Uplambda}^{-1} \widetilde{\Tcal} \widetilde{\Uplambda}^{-1}$ vanishes. 
	A straightforward calculation shows that this happens for $\vel$ equal to
	\begin{gather*}
		\vel_\pm = -\frac{\Tcal_{00} + \Tcal_{11}}{2\Tcal_{01}} \pm \sqrt{\left(\frac{\Tcal_{00} + \Tcal_{11}}{2\Tcal_{01}}\right)^2 -1}
	\end{gather*}
	under the condition $\vel_\pm \in (-1,1)$. 
	From Lemma \ref{lem:set-ness-components} it follows that 
	\begin{gather*}
		-\frac{\Tcal_{00}+\Tcal_{11}}{2\Tcal_{01}} = \frac{4}{3} \, \frac{\beta_R^4 + \beta_L^4}{\beta_R^4 - \beta_L^4} =: \kappa(\beta_L , \beta_R) \, .
	\end{gather*}
	We have $\kappa(\beta_L , \beta_R)>\frac{4}{3}$ (respectively, $\kappa(\beta_L , \beta_R)<-\frac{4}{3}$) if and only if $\beta_R > \beta_L > 0$ (respectively, $0<\beta_R < \beta_L$), in which case the unique velocity of the Lorentz boost diagonalizing the expected stress-energy tensor is $\vel_- \in (0,\frac{4-\sqrt{7}}{3})$ (respectively, $\vel_+ \in (-\frac{4-\sqrt{7}}{3},0)$). 
	Combining these two cases yields the velocity $\vel_N (\beta_L , \beta_R)$ as stated.\medskip
	
	For a Lorentz boost $\Uplambda$ in $x^1$-direction as in Eq.\ \eqref{eq:boost-x1-direction}, the $(11)$ component of $\Uplambda^{-1} (\mathcal{T}_{\mu\nu}) \Uplambda^{-1}$ is given by $\gam^2 (\Tcal_{11} + 2\Tcal_{01} \vel + \Tcal_{00} \vel^2)$, whereas the $(22)$ component equals $\frac{1}{3} \Tcal_{00} = \Tcal_{11}$ by Lemma \ref{lem:set-ness-components}. 
	Hence the $(11)$ and $(22)$ components coincide either when $\vel=0$ or
	\begin{gather*}
		\vel = -\frac{3\Tcal_{01}}{2\Tcal_{00}} = \frac{3}{4} \, \frac{\beta_R^4 - \beta_L^4}{\beta_R^4 + \beta_L^4} \equiv \frac{1}{\kappa(\beta_L , \beta_R)} \, .
	\end{gather*}
	Since $0<|\vel_N (\beta_L , \beta_R)| < |\kappa(\beta_L , \beta_R)^{-1}|$ for all $\beta_L , \beta_R > 0$, $\beta_L \neq \beta_R$, the $(11)$ and $(22)$ components of the expected stress-energy tensor in the comoving frame do not coincide. 
\end{proof}

Some remarks on the velocity of the comoving frame are in order. 
It is antisymmetric in the initial KMS parameters $\beta_L , \beta_R$ of the NESS, i.e.\ $\vel_N (\beta_L , \beta_R)=-\vel_N (\beta_R , \beta_L)$. 
This reflects the physical picture that the direction of the heat flow is reversed when the thermal reservoirs are swapped. 
The bound $|\vel_N (\beta_L , \beta_R) | < \frac{4-\sqrt{7}}{3}$ is an interesting property that is a consequence of the expectation value of the stress-energy tensor in four spacetime dimensions (Lemma \ref{lem:set-ness-components}). 
If one of the semi-infinite heat baths is prepared in the vacuum state, i.e.\ $\beta_L > 0$ and $\beta_R \to \infty$ or vice versa, we have $|\kappa(\beta_L , \beta_R)|\to\frac{4}{3}$ and thus $|\vel_N (\beta_L , \beta_R)|\to\frac{4-\sqrt{7}}{3} \approx 0.4514$. 
Similarly, $|\vel_N (\beta_L , \beta_R)|\to\frac{4-\sqrt{7}}{3}$ when one of the reservoirs approaches infinitely high temperature, $\beta_{L/R} \to 0$. 
Hence the heat flow of the NESS always travels subluminally (see Figure \ref{fig:sub1-ness-comoving}).

\begin{figure}[!t]
	\centering
	\begin{subfigure}{.5\textwidth}
		\centering
		\includegraphics[width=\textwidth]{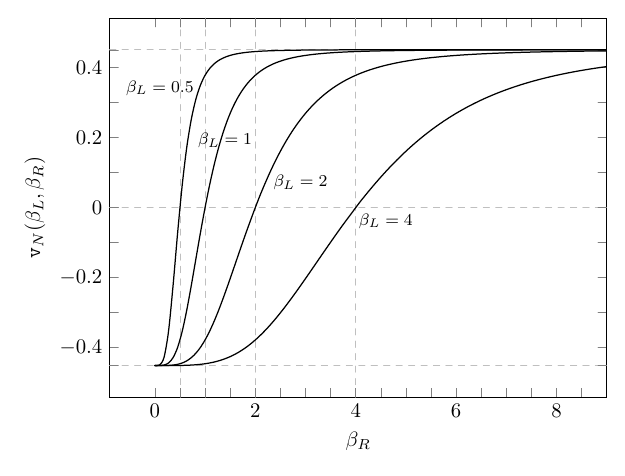}
		\caption{$\beta_R \mapsto \vel_N (\beta_L , \beta_R)$}
		\label{fig:sub1-ness-comoving}
	\end{subfigure}%
	\begin{subfigure}{.5\textwidth}
		\centering
		\includegraphics[width=\textwidth]{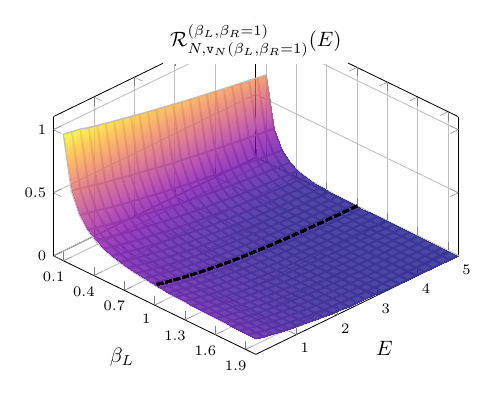}
		\caption{$(\beta_L , E)\mapsto\Rcal_{N,\vel_N (\beta_L , \beta_R = 1)}^{(\beta_L , \beta_R = 1)} (E)$ ($\beta_L > 0.1$)}
		\label{fig:sub2-ness-comoving}
	\end{subfigure}
	\caption{\textbf{(a)} Plot of the velocity $\vel_N (\beta_L , \beta_R)$ of the comoving frame as a function of $\beta_R$ for different fixed values of $\beta_L$. The velocity $\vel_N (\beta_L , \beta_R)$ is negative when $\beta_R < \beta_L$, and positive when $\beta_R > \beta_L$. The modulus of the velocity is bounded by $\frac{4-\sqrt{7}}{3} \approx 0.4514$. The velocity function $\vel_N$ extends to $(\beta_L , \beta_R)$ with $\beta_R = \beta_L$, where it vanishes. The functions $\beta_L \mapsto \vel_N (\beta_L , \beta_R)$ for fixed $\beta_R$ are obtained via reflection along the horizontal axis. \quad \textbf{(b)} Surface plot of $(\beta_L , E)\mapsto\Rcal_{N,\vel}^{(\beta_L , \beta_R)} (E)$ for $\vel=\vel_N (\beta_L , \beta_R)$, $\beta_R = 1$, and $\beta_L > 0.1$, $E>0$. The dashed surface curve at $\beta_L = 1$ (in which case the detector interacts with a $(\beta=1)$-KMS state and $\vel_N \to 0$) represents the Planckian $\Rcal_{0}^{(\beta=1)}$.}
	\label{fig:rate-ness-comoving}
\end{figure}

This is in contrast to the NESS $\sigma_N$ of the massless scalar field on two-dimensional Minkowski spacetime, for which $\sigma_N (T^{00}) = \sigma_N (T^{11}) = \frac{\pi}{12} (\beta_L^{-2} + \beta_R^{-2})$ and $\sigma_N (T^{01}) = \sigma_N (T^{10}) = \frac{\pi}{12} (\beta_L^{-2} - \beta_R^{-2})$ (see \cite{DLSB}). 
The velocity of the boost that diagonalizes the expected stress-energy tensor in that case is given by 
\begin{gather*}
	\vel_1 := \kappa_1 - \sgn(\beta_R - \beta_L) \sqrt{\kappa_1^2 - 1} = \frac{\beta_R - \beta_L}{\beta_R + \beta_L} \, , \quad \kappa_1 := \frac{\sigma_N (T^{00})}{\sigma_N (T^{01})} = \frac{\beta_R^2 + \beta_L^2}{\beta_R^2 - \beta_L^2} \, ,
\end{gather*}
which is precisely the boost velocity for the frame in which $\sigma_N$ is a KMS state of temperature $(\beta_L \beta_R)^{-1/2}$ \cite{DLSB} (see also \cite{BDLS2015}). 
The modulus of $\vel_1$ approaches the speed of light in the ground state limit of one of the heat baths.\medskip

As noted at the beginning of this section, we propose that the significance of the comoving frame for the study of detectors coupled to NESS arises from the paradigm that the instantaneous rest frame of a system is the preferred reference frame for the formulation of its thermal properties. 
For the NESS with initial heat bath KMS parameters $\beta_L , \beta_R$ the comoving frame with velocity $\vel_N (\beta_L , \beta_R)$ is the inertial rest frame of the heat flow of the NESS. 
In the case \textsf{(NESS-$\vel_N$)} in which the detector moves along the $x^1$-axis with velocity $\vel_N (\beta_L , \beta_R)$ relative to the inertial frame $I$, the transition rate is given by $\Rcal_{N,\vel_N (\beta_L , \beta_R)}^{(\beta_L , \beta_R)}$ (Eq.\ \eqref{eq:transition-rate-ness-vel}), which is now a function of $\beta_L , \beta_R , E$. 
The transition rate $\Rcal_{N,\vel_N (\beta_L , \beta_R)}^{(\beta_L , \beta_R)}$ is kept invariant when the KMS parameters $\beta_L , \beta_R$ are swapped, thanks to the property $\vel_N (\beta_L , \beta_R)=-\vel_N (\beta_R , \beta_L)$ (cf.\ the discussion below Proposition \ref{prop:ness-v}). 
The heat flow is absent and the relative velocity of the comoving frame vanishes when $\beta_L = \beta_R =: \beta$ (see Figure \ref{fig:sub1-ness-comoving}), so the transition rate $\Rcal_{N,\vel_N (\beta_L , \beta_R)}^{(\beta_L , \beta_R)}$ of the detector reduces to $\Rcal_{0}^{(\beta)}$, the Planckian transition rate of the detector at rest relative to the KMS state (see Eqs.\ \eqref{eq:transition-rate-ness-vel} \& \eqref{eq:transition-rate-kms-rest}). 
In a sense, choosing the relative velocity $\vel$ of the detector to be the temperature-dependent velocity $\vel_N$ provides an interpolation between the transition rates of the cases \textsf{(NESS-$\vel$)} and \textsf{(KMS-$0$)} as the parameters $\beta_L,\beta_R$ are varied. 

Figure \ref{fig:sub2-ness-comoving} shows a plot of $(\beta_L , E) \mapsto \Rcal_{N,\vel_N (\beta_L , 1)}^{(\beta_L , 1)} (E)$ for $E>0$. 
Although the detector is at rest relative to the heat flow (the ``medium'' of the NESS), the response is non-thermal (unless $\beta_L = \beta_R$), thus showing a difference to the case of a detector that couples to the rest frame of an inertial heat bath. 
This is in line with the anisotropy of the expected stress-energy tensor in the comoving frame (Proposition \ref{prop:comoving}): 
While the transformation to the comoving frame compensates the impact of the heat flow in longitudinal direction (so the expected stress-energy tensor is diagonalized), transversal modes still transmit (Doppler shifted) energy and momentum to the detector under the isotropic monopole coupling. 
Further studies of the response with other detector couplings could reveal more details regarding this behavior.

\section{Conclusions and outlook}
\label{sec:conclusions-outlook}

In this work we studied the transition rate of a two-level Unruh-DeWitt detector that is coupled via a point-like monopole interaction to the NESS of a free massless scalar field on four-dimensional Minkowski spacetime. 
The NESS, which has been constructed in \cite{DLSB,Hack-Verch}, arises from bringing two semi-infinite heat baths, described by inertial KMS states at inverse temperatures $\beta_L , \beta_R >0$, into contact in a neighborhood of the spatial surface with $x^1 = 0$. 
The state is invariant under time and space translations and features a steady flow of energy between the heat baths. 
The right- and left-moving modes along the $x^1$-axis are in thermal equilibrium at inverse temperatures $\beta_L$ and $\beta_R$, respectively. 

Accordingly, if the detector is at rest relative to the defining inertial (laboratory) frame of the NESS (Section \ref{sec:case-NESS-stat}), the transition rate equals the arithmetic mean of the Planckian rates associated to the two heat baths (Proposition \ref{prop:ness-stat}). 
The NESS looks to the detector like a mixture of KMS states and thus the response corresponds to an interaction with a state that is passive \cite{Pusz-Woronowicz,Fewster-Verch2003} (see also \cite[Sec.\ 4.8.2]{Fewster-Verch-AQFT2015}) but not in thermal equilibrium at some uniquely assigned temperature. 
For non-zero constant velocity along the $x^1$-axis (Section \ref{sec:case-NESS-vel}) the form of the detector's transition rate (Proposition \ref{prop:ness-v}) indicates the impact of purely kinematical effects (Doppler effect). 
This is structurally similar to the response obtained from the motion relative to a thermal bath \cite{Costa-Matsas-background1995,Costa-Matsas1995,Landsberg-Matsas1996}, reviewed in Section \ref{sec:case-KMS}. 
As a special case we considered the limit $\beta_L \to \infty$ that brings the left reservoir into the vacuum state, and discussed the dependence of the corresponding transition rate on the relative velocity of the detector. 
In particular, a small velocity expansion allowed to highlight differences to the case of a detector moving in a heat bath \cite{Costa2004} (Corollary \ref{cor:ness-quotient}). 
In Section \ref{sec:detailed-balance} we collected some properties of the corresponding effective temperatures obtained from the detailed balance condition (Proposition \ref{prop:eff-temp}). 
These results may represent a starting point for further research of NESS being probed by Unruh-DeWitt detectors, as there are several topics we have not pursued here (e.g., numerical analysis of transition rates and effective temperatures; ramifications with regards to the comoving inertial frame that moves with the heat flow of the NESS; influence of other detector couplings, spacetime dimensions, and non-zero mass parameter of the field). \medskip

The main takeaway from our results is that a monopole detector coupled to a NESS is, in general, not sensitive to dynamical features. 
A proper ``NESS detector'' should be able to pick up the current of the heat flow and convert its energy-momentum into internal energy, which then would be visible in the transition rate already for a detector at rest. 
In case such a detector moves along with the heat flow by being at rest relative to the comoving frame (as introduced in Section \ref{sec:case-NESS-comoving}), one would expect that the response corresponds to a passive state, since in this frame the current is compensated and thus one cannot extract energy from it via cyclic processes. 
(To try the analogy employed in \cite{Buchholz-Solveen}, the detector is stationary in the rest frame of the spinning ``pinwheel'', characterized by zero flow velocity from the detector's perspective.) 
All of this certainly depends on the choice of detector model and detector-field interaction. 
Couplings of interest could involve higher moment operators; see, e.g., \cite{Hinton1983,Moustos2018,Papadatos-Anastopoulos2020,Takagi1986,Marzlin-Audretsch1998} (and references therein) for some works that consider dipole moments in the interaction with a derivative of a scalar field or in the context of model atoms coupled to an electromagnetic field. 
More specifically, a dynamical coupling to the heat flow might be achieved by means of a dipole (or multipole) moment that couples to tensorial, current-related quantities of the NESS, like the stress-energy tensor (see \cite{Padmanabhan-Singh1987} for such a model). 
This could facilitate the study of (an)isotropy of the heat flow of the NESS from the perspective of the detector.\medskip

Finally, our discussion reflects the well-known subtlety of the concept of temperature in non-equilibrium and relativistic systems \cite{CasasVazquez-Jou2003,Neugebauer}. 
In certain instances it might be sensible to assign an effective temperature to a relativistic NESS. 
For example, one can introduce a Lorentz-invariant ``proper effective temperature'' based on gauge-gravity duality \cite{Hoshino-Nakamura2020} (see also references therein). 
However, the non-Planckian transition rates in Sections \ref{sec:case-NESS-stat} \& \ref{sec:case-NESS-vel} and the corresponding (energy-dependent) detailed balance temperatures obtained in Section \ref{sec:detailed-balance} reveal that generally such a quantity will not be a temperature by way of Unruh-DeWitt detectors, i.e.\ it is not a temperature from a microscopic (Boltzmann-Gibbs or KMS equilibrium) viewpoint (cf.\ \cite{Haag1996}). 
As far as the response of the detector is concerned, the NESS does not have a uniquely determined temperature, similar to heat baths from the perspective of moving inertial reference frames \cite{Costa-Matsas1995,Landsberg-Matsas1996,Landsberg-Matsas2004}. 
The mentioned limitation of the detector setup to measure dynamical properties of the NESS motivates a deeper investigation of these topics.


\bigskip
\bigskip

\mysepline

\paragraph{Acknowledgments}

The authors thank Marco Merkli for discussion and pointing out several references regarding NESS, and two anonymous referees for helpful comments. 
A.G.P.\ is indebted to the IMPRS and MPI for Mathematics in the Sciences, Leipzig, where early parts of this work have been conceived, for support and hospitality. 
R.V.\ gratefully acknowledges funding by the CY Initiative of Excellence (grant ``Investissements d'Avenir'' ANR-16-IDEX-0008) during his stay at the CY Advanced Studies.

\bigskip
\bigskip

\mysepline


\appendix

\section{Transition rate of detector moving through heat bath}
\label{appendix:cm}

In this appendix we prove Eq.\ \eqref{eq:transition-rate-kms-moving}, the transition rate of an Unruh-DeWitt monopole detector moving with constant non-zero velocity while coupled to a free massless scalar field on four-dimensional Minkowski spacetime at positive temperature. 
The result has been stated before by Costa \& Matsas \cite{Costa-Matsas-background1995,Costa-Matsas1995}, and a proof by contour integration has been outlined in \cite[Appendix C.2]{Hodgkinson-Louko-Ottewill2014}. 
(For a detector coupled to the time derivative of a scalar field on three-dimensional Minkowski spacetime a corresponding result can be found in \cite{Bunney-Louko2023}.) 
Serving as reference in the main text, we present a derivation starting from the two-point function $W_\beta$ of the field's $\beta$-KMS state $\omega_\beta$ ($\beta>0$) given by Eq.\ \eqref{eq:thermal-two-point}. \medskip 

In accordance with the Hadamard property, $\widetilde{W}_\beta := W_\beta - W_{\mathrm{vac}}$ (for the vacuum two-point function $W_{\mathrm{vac}}$, see Eq.\ \eqref{eq:vac-two-point}) is a smooth function on $\M\times\M$. 
It can be written in the form (see \cite{Weldon2000} and \cite[Sec.\ F.3.5]{Frolov2011})
\begin{align}
	\label{eq:thermal-two-point2}
	\widetilde{W}_\beta(x,y) & = \frac{1}{4\pi^2 \left((x^0 - y^0)^2 - \lVert\xvec-\yvec\rVert^2 \right)} + \\ & + \frac{1}{8\pi\beta\lVert\xvec-\yvec\rVert} \left[ \coth\left( \frac{\pi}{\beta} \left(x^0 - y^0 + \lVert\xvec-\yvec\rVert \right) \right) - \coth\left( \frac{\pi}{\beta} \left(x^0 - y^0 - \lVert\xvec-\yvec\rVert\right) \right) \right] \, . \nonumber
\end{align}
For the sake of self-containment, let us derive this expression from the distributional integral representations of $W_{\mathrm{vac}}$ and $W_\beta$. 
As these two-point functions only depend on $x-y$ we can set $y=0$ without loss of generality. 
Using spherical coordinates,
\begin{align*}
	\widetilde{W}_\beta (x,0) &= \frac{1}{(2\pi)^3} \int\limits_{\R^3} \frac{1}{2\lVert\pvec\rVert}  \frac{\e^{i\lVert\pvec\rVert x^0} \e^{-i\pvec\cdot\xvec} + \e^{-i\lVert\pvec\rVert x^0} \e^{i\pvec\cdot\xvec}}{\e^{\beta\lVert\pvec\rVert} - 1} \, \Diff3\pvec = \\ &= \frac{1}{(2\pi)^2} \int_0^\infty \int_0^\pi  \frac{r\cos\left(r x^0 - r\lVert\xvec\rVert \cos(\theta)\right)}{\e^{\beta r} - 1} \, \sin(\theta) \diff\theta \, \diff r \, .
\end{align*}
For $\xvec=\vec{0}$ we have $\widetilde{W}_\beta ((x^0 , \vec{0}),0) = \frac{1}{4\pi^2 (x^0)^2} - \frac{1}{4\beta^2} \frac{1}{\sinh^2 (\frac{\pi}{\beta} x^0 )}$ by \cite[Sec.\ 1.4, (8)]{Erdelyi}, which leads to $\widetilde{W}_\beta (0,0)=\frac{1}{12\beta^2} = \widetilde{W}_\beta (x,x)$ (for all $x\in\M$) in the limit $x^0 \to 0$ (see also \cite[Sec.\ 6.3, (7)]{Erdelyi}). 
For $\xvec\neq\vec{0}$, 
\begin{align*}
	\widetilde{W}_\beta (x,0) &= \frac{1}{4\pi^2 \lVert\xvec\rVert} \sum\limits_{\varsigma=\pm 1} \varsigma \int_0^\infty \frac{\sin\left(r\left(x^0 + \varsigma \lVert\xvec\rVert\right)\right)}{\e^{\beta r} - 1} \, \diff r = \\ &= \frac{1}{4\pi^2 \lVert\xvec\rVert} \sum\limits_{\varsigma=\pm 1} \varsigma \left( -\frac{1}{2 \left(x^0 + \varsigma\lVert\xvec\rVert\right)} + \frac{\pi}{2\beta} \coth\left(\frac{\pi}{\beta} \left(x^0+\varsigma\lVert\xvec\rVert\right) \right) \right) \, ,
\end{align*}
which extends smoothly to all $x\in\M$ and results in Eq.\ \eqref{eq:thermal-two-point2} after reinstating $y$ by $\widetilde{W}_\beta (x,y)=\widetilde{W}_\beta (x-y,0)$. 
Here, we used the Fourier sine transform \cite[Sec.\ 2.4, (11)]{Erdelyi}
\begin{gather}
	\int_0^\infty \frac{\sin(qr)}{\e^{\beta r} - 1} \, \diff r = -\frac{1}{2q} + \frac{\pi}{2\beta} \coth\left(\frac{\pi q}{\beta} \right) \, , \quad \beta , q > 0 \, ,
	\label{eq:sine-transf}
\end{gather}
which also applies to $q<0$ by symmetry and extends smoothly to $q=0$.\medskip 

Now let $\widetilde{w}_{\beta,\vel} (s):=\widetilde{W}_\beta (\xsf_{\vel} (s),0) \equiv (W_\beta - W_{\mathrm{vac}})(\xsf_{\vel} (s),0)$ for the inertial detector worldline $\tau\mapsto \xsf_{\vel} (\tau)=\gam (\tau , \vel \tau ,0,0)$ with $0<|\vel|<1$. 
(Any other inertial worldline, with constant velocity $\vel$ in a different spatial direction, leads to the same transition rate.) 
The transition rate (Eq.\ \eqref{eq:transition-rate}) in question is 
\begin{gather}
	\label{eq:transition-rate-kms-moving2}
	\Rcal_{\vel}^{(\beta)} (E) = \int\limits_{\R} \e^{-iEs} \widetilde{w}_{\beta,\vel} (s) \diff s - \frac{E}{2\pi} \Theta(-E)
\end{gather}
for $E\in\R\setminus\{0\}$, where we split off the vacuum contribution that is given by Eq.\ \eqref{eq:transition-rate-vac}. 
Using Eq.\ \eqref{eq:thermal-two-point2} one finds that
\begin{align}
	\widetilde{w}_{\beta,\vel} (s) &= \frac{1}{4\pi^2 s^2} + \frac{1}{8\pi\beta\gam|\vel s|} \left( \coth\left( \frac{\pi \gam}{\beta} (s+|\vel s|) \right) - \coth\left( \frac{\pi \gam}{\beta} (s-|\vel s|) \right) \right) = \nonumber \\ &= \frac{1}{4\pi\beta\gam\vel} \left[ \frac{\beta\gam\vel}{\pi s^2} + \frac{1}{2s} \coth\left( \frac{\pi s}{\beta \dop_-} \right) - \frac{1}{2s} \coth\left( \frac{\pi s}{\beta \dop_+} \right) \right] = \nonumber \\ &= \frac{1}{4\pi\beta\gam\vel} \left( f_{\beta \dop_+} (s) - f_{\beta \dop_-} (s) \right) \, , \label{eq:two-point-reg}
\end{align}
where $\dop_\pm = \gam(1\pm\vel)$ are the Doppler factors, and
\begin{gather}
	\label{eq:fc}
	f_c (s) := \frac{c}{2\pi s^2} - \frac{1}{2s} \coth\left( \frac{\pi s}{c} \right) = -\frac{c}{\pi} \sum\limits_{n=1}^\infty \frac{1}{n^2 c^2 + s^2} \, , \quad c>0 \, .
\end{gather}
Note that the Laurent series expansion of $s\mapsto \frac{1}{2s} \coth(\frac{\pi s}{c})$ in a punctured neighborhood of $s=0$ has the singular part $\frac{c}{2\pi s^2}$, and $\lim_{s\to 0} f_c (s)=-\frac{\pi}{6c}$. 
For every $c>0$ the function $f_c$ is defined as the unique extension of the given expression to a bounded, even, smooth function on $\R$ as signified by the stated series expansion in Eq.\ \eqref{eq:fc} (see \cite[4.36.3]{NIST:DLMF}). 
The Fourier cosine transform $\int_0^\infty \cos(sx) \ln(1-\e^{-cx}) \diff x = \pi f_c (s)$ (for $c,s>0$) from \cite[Sec.\ 1.5, (14)]{Erdelyi} can be inverted to $\int_0^\infty \cos(qs) f_c (s) \diff s = \frac{1}{2} \ln(1-\e^{-cq})$ for $c,q>0$, and thus
\begin{gather}
	\label{eq:fc-fourier}
	\int\limits_{\R} \e^{-iEs} f_c (s) \diff s = 2\int_0^\infty \cos(|E|s) f_c (s) \diff s = \ln(1-\e^{-c|E|}) \, .
\end{gather}
Combining Eqs.\ \eqref{eq:two-point-reg} \& \eqref{eq:fc-fourier} we obtain, for $E\in\R\setminus\{0\}$, 
\begin{gather*}
	\int\limits_{\R} \e^{-iEs} \widetilde{w}_{\beta,\vel} (s) \diff s = \frac{1}{4\pi\beta\gam\vel} \ln\left(\frac{1-\e^{-\beta \dop_+ |E|}}{1-\e^{-\beta \dop_- |E|}} \right) \, .
\end{gather*}
For $E<0$ this gives 
\begin{gather*}
	\int\limits_{\R} \e^{-iEs} \widetilde{w}_{\beta,\vel} (s) \diff s = \frac{1}{4\pi\beta\gam\vel} \ln\left(\frac{1-\e^{\beta \dop_+ E}}{1-\e^{\beta \dop_- E}} \right) = \frac{(\dop_+ - \dop_-)E}{4\pi\gam\vel} + \frac{1}{4\pi\beta\gam\vel} \ln\left(\frac{1-\e^{-\beta \dop_+ E}}{1-\e^{-\beta \dop_- E}} \right) \, ,
\end{gather*}
of which the first term equals $\frac{E}{2\pi}$ since $\dop_+ - \dop_- = 2\gam\vel$. 
Eq.\ \eqref{eq:transition-rate-kms-moving2} thus results in Eq.\ \eqref{eq:transition-rate-kms-moving}.

\section[Alternative derivation for NESS-v]{Alternative derivation for \textsf{(NESS-$\vel$)}}
\label{appendix:NESS-v}

Complementary to the proof of Proposition \ref{prop:ness-v} we provide an alternative derivation for $\widetilde{w}_{N,\vel}$, the smooth part of the NESS two-point function pulled back along the inertial worldline $\xsf_{\vel}$ for $\vel\in(-1,1)\setminus\{0\}$ (Eq.\ \eqref{eq:inertial-worldline}), by presenting explicit expressions for the integrals obtained from the two terms in Eq.\ \eqref{eq:ness-two-point-vel}. 
Using spherical coordinates we see that 
\begin{align*}
	& \int\limits_{\R^3} \frac{1}{2\lVert\pvec\rVert} \frac{\e^{ip_1 \gam \vel s} \e^{i\lVert\pvec\rVert \gam s}}{\e^{\upbeta(-p_1)\lVert\pvec\rVert} - 1} \, \Diff3\pvec = \\ & = \pi \int_0^\infty \int_0^{\pi/2} r \, \left(\frac{\e^{i\gam\vel sr \cos(\theta)} \e^{i\gam sr}}{\e^{\beta_R r}-1} + \frac{\e^{-i\gam\vel sr \cos(\theta)} \e^{i\gam sr}}{\e^{\beta_L r}-1} \right) \, \sin(\theta) \diff\theta \diff r  = \\ & = \frac{-i\pi}{\gam\vel s} \int_0^\infty \frac{\e^{i\gam sr} (\e^{i\gam\vel sr} - 1)}{\e^{\beta_R r}-1} \, \diff r + \frac{i\pi}{\gam\vel s} \int_0^\infty \frac{\e^{i\gam sr} (\e^{-i\gam\vel sr} - 1)}{\e^{\beta_L r}-1} \, \diff r = \\ & = \frac{-i\pi}{\gam\vel\beta_R s} \left[ \uppsi\left(1-\frac{i\gam s}{\beta_R}\right) - \uppsi\left(1-\frac{i\dop_+ s}{\beta_R}\right) \right] + \frac{i\pi}{\gam\vel\beta_L s} \left[ \uppsi\left(1-\frac{i\gam s}{\beta_L}\right) - \uppsi\left(1-\frac{i\dop_- s}{\beta_L}\right) \right]
\end{align*}
for all $s\in\R$ (with extension to $s=0$ implied), where $\uppsi$ is the digamma function (logarithmic derivative of the gamma function) \cite[5.2.2]{NIST:DLMF}. 
Here we used $\int_0^\infty \frac{\e^{iax} - \e^{ibx}}{\e^x - 1} \diff x = \uppsi(1-ib)-\uppsi(1-ia)$ for $a,b\in\R$, which follows from the integral representation $\uppsi(z)=\int_0^\infty \left( \frac{\e^{-x}}{x} - \frac{\e^{-(z-1)x}}{\e^x - 1} \right) \diff x$ for $z\in\C$, $\Re(z)>0$ \cite[5.9.12]{NIST:DLMF}. 
Similarly, 
\begin{align*}
	& \int\limits_{\R^3} \frac{1}{2\lVert\pvec\rVert} \frac{\e^{ip_1 \gam \vel s} \e^{-i\lVert\pvec\rVert \gam s}}{\e^{\upbeta(p_1)\lVert\pvec\rVert} - 1} \, \Diff3\pvec = \\ & = \frac{-i\pi}{\gam\vel\beta_L s} \left[ \uppsi\left(1+\frac{i\gam s}{\beta_L}\right) - \uppsi\left(1+\frac{i\dop_- s}{\beta_L}\right) \right] + \frac{i\pi}{\gam\vel\beta_R s} \left[ \uppsi\left(1+\frac{i\gam s}{\beta_R}\right) - \uppsi\left(1+\frac{i\dop_+ s}{\beta_R}\right) \right] \, .
\end{align*}
Applying these results to Eq.\ \eqref{eq:ness-two-point-vel} and using $\uppsi(1+ia)-\uppsi(1-ia) = i\pi\coth(\pi a)-\frac{i}{a}$ for $a\in\R\setminus\{0\}$ \cite[5.5.2 \& 5.5.4]{NIST:DLMF}, one obtains $\widetilde{w}_{N,\vel} (s)$ for $s\in\R\setminus\{0\}$ as in Eq.\ \eqref{eq:ness-two-point-vel-2}, which extends smoothly to zero with $\widetilde{w}_{N,\vel} (0) = \frac{1}{24} (\beta_L^{-2} + \beta_R^{-2})$.

\section{Comments on effective temperatures}
\label{appendix:eff-temp}

Consider an inertial observer moving with constant velocity $\vel\in(-1,1)$ relative to a heat bath of temperature $\beta^{-1} > 0$. 
The anisotropic radiation energy spectrum observed in some fixed direction forming an angle $\theta$ to the axis of motion has been found, e.g., in \cite{Peebles-Wilkinson,Bracewell-Conklin,Henry-CMB1968} in the context of the cosmic microwave background radiation as observed on Earth, and is given by $f(E,T') := E/(2\pi(\e^{E/T'(\beta,\vel,\theta)} -1))$ with the \emph{directional temperature} (see also \cite{Landsberg-Matsas1996,Landsberg-Matsas2004}) 
\begin{gather*}
	T'(\beta,\vel,\theta)=\frac{\sqrt{1-\vel^2}}{\beta(1-|\vel|\cos(\theta))} \, .
\end{gather*}
Along the axis of motion ($\theta=0$) this equals the Doppler shifted temperature $T'(\beta,\vel,0)=\beta^{-1} \sqrt{(1+|\vel|)/(1-|\vel|)}$. 
A critical assessment and derivation of $T'$ can be found in \cite{Nakamura2009}. 
Averaging $f(E,T')$ over solid angles results in the distribution given in Eq.\ \eqref{eq:transition-rate-kms-moving} that appears as the transition rate of an Unruh-DeWitt detector carried by the moving observer and coupled to the heat bath of a massless scalar field \cite{Costa-Matsas1995,Landsberg-Matsas1996}. 
The solid angle average of $T'$ is (see \cite[Eq.\ (8)]{Landsberg-Matsas1996})
\begin{gather*}
	\frac{1}{4\pi} \int_0^{2\pi} \int_0^\pi T'(\beta,\vel,\theta) \sin(\theta) \diff\theta \diff\varphi = \frac{\sqrt{1-\vel^2}}{2\beta} \int_0^\pi \frac{\sin(\theta)}{1-|\vel|\cos(\theta)} \diff\theta = \frac{\sqrt{1-\vel^2}}{2\beta\vel} \ln\left(\frac{1+\vel}{1-\vel}\right) \, ,
\end{gather*}
which equals the lowest order of the detailed balance effective temperature in the case \ref{item:KMS} (see Eq.\ \eqref{eq:Tvbeta}).

\section{The NESS stress-energy tensor}
\label{appendix:set-ness}

We show that the expectation value of the stress-energy tensor $T_{\mu\nu}$ (Eq.\ \eqref{eq:set}) in the NESS $\omega_N$ of the free massless scalar field on four-dimensional Minkowski spacetime (Definition \ref{def:ness}) has block diagonal form and calculate its components. 
The results have been presented before in \cite{DLSB} (for $T^{\mu\nu}$) for free scalar fields on $(1+d)$-dimensional Minkowski spacetime ($d\geq 1$) and with arbitrary field mass. 
For completeness, and to connect with the definitions used in our work, we rederive these results using the two-point function of the NESS based on \cite{Hack-Verch}.

\begin{lemma}[NESS expected stress-energy tensor]
	\label{lem:set-ness-components}
	The expectation value of the stress-energy tensor $T_{\mu\nu}$ of the massless scalar field on four-dimensional Minkowski spacetime $\M$ in the NESS $\omega_N$ (with $\beta_L , \beta_R > 0$) has the matrix form
	\begin{gather*}
		(\Tcal_{\mu\nu}) := (\omega_N (T_{\mu\nu} (x))) = \begin{pmatrix} \widetilde{\Tcal} & \nullmatrix_2 \\ \nullmatrix_2 & \diag (\Tcal_{\perp},\Tcal_{\perp}) \end{pmatrix}
	\end{gather*}
	for every $x\in\M$, where $\nullmatrix_2$ is the $2\times 2$ zero matrix, $\widetilde{\Tcal}$ is a symmetric, non-zero $2\times 2$ matrix, and $\Tcal_{\perp}:=\Tcal_{22} = \Tcal_{33}$. 
	The non-trivial components are given by
	\begin{gather*}
		\Tcal_{00} = \frac{\pi^2}{60} \left( \frac{1}{\beta_L^4} + \frac{1}{\beta_R^4} \right) \, , \quad \Tcal_{01} = \frac{\pi^2}{120} \left( \frac{1}{\beta_R^4} - \frac{1}{\beta_L^4} \right) \, , \quad \Tcal_{ii} = \frac{1}{3} \Tcal_{00} \quad (i\in\{1,2,3\}) \, .
	\end{gather*}
\end{lemma}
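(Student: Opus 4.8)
The plan is to compute $\Tcal_{\mu\nu} = \omega_N(T_{\mu\nu}(x))$ as the coincidence limit of a bidifferential operator applied to the smooth part $\widetilde{W}_N$ of the two-point function. Because the normal ordering in Eq.\ \eqref{eq:set} is taken with respect to the vacuum, the point-split expectation value is obtained by subtracting $W_{\mathrm{vac}}$, so that
\[
\Tcal_{\mu\nu} = \lim_{y\to x}\left(\partial_\mu^x\partial_\nu^y - \tfrac{1}{2}\eta_{\mu\nu}\eta^{\rho\sigma}\partial_\rho^x\partial_\sigma^y\right)\widetilde{W}_N(x,y),
\]
where $\widetilde{W}_N = W_N - W_{\mathrm{vac}}$ is the smooth function of Eq.\ \eqref{eq:ness-two-point-reg}. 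First I would differentiate under the integral (legitimate because the Bose factors decay exponentially, rendering the integrals with the extra polynomial weights absolutely convergent). On the first summand the derivatives $\partial_\mu^x,\partial_\nu^y$ produce $i k_\mu$ and $-i k_\nu$ for the on-shell null covector $k=(\lVert\pvec\rVert,p_1,p_2,p_3)$, and on the second summand they produce $\bar{k}=(-\lVert\pvec\rVert,p_1,p_2,p_3)$; in the limit $y\to x$ all exponentials become unity.

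The decisive simplification is that both covectors are null, $\eta^{\rho\sigma}k_\rho k_\sigma = \eta^{\rho\sigma}\bar{k}_\rho\bar{k}_\sigma = 0$, since the field is massless. Hence the trace term vanishes mode-by-mode and one is left with
\[
\Tcal_{\mu\nu} = \frac{1}{(2\pi)^3}\int\limits_{\R^3}\frac{1}{2\lVert\pvec\rVert}\left(\frac{k_\mu k_\nu}{\e^{\upbeta(-p_1)\lVert\pvec\rVert}-1} + \frac{\bar{k}_\mu\bar{k}_\nu}{\e^{\upbeta(p_1)\lVert\pvec\rVert}-1}\right)\Diff3\pvec.
\]
The block-diagonal structure then follows by parity: every component linking a transversal index ($2$ or $3$) to another index carries an odd power of $p_2$ or of $p_3$, while the weights $\upbeta(\pm p_1)$ and $\lVert\pvec\rVert$ are even in $(p_2,p_3)$, so $\Tcal_{02}=\Tcal_{03}=\Tcal_{12}=\Tcal_{13}=\Tcal_{23}=0$; and $\Tcal_{22}=\Tcal_{33}$ by the $p_2\leftrightarrow p_3$ symmetry.

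For the surviving components I would pass to spherical coordinates with polar axis along $p_1$ (so $p_1=r\cos\theta$, with $\theta\in[0,\pi/2)$ the hemisphere $p_1>0$ and $\theta\in(\pi/2,\pi]$ the hemisphere $p_1<0$), reducing everything to the radial Bose integral $\int_0^\infty r^3(\e^{\beta r}-1)^{-1}\,\diff r = \Gamma(4)\zeta(4)\beta^{-4} = \pi^4/(15\beta^4)$. For $\Tcal_{00}$ the two Bose factors add to the symmetric combination $(\e^{\beta_L r}-1)^{-1}+(\e^{\beta_R r}-1)^{-1}$ regardless of hemisphere, and the angular factor is $4\pi$, giving $\Tcal_{00}=\frac{\pi^2}{60}(\beta_L^{-4}+\beta_R^{-4})$. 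For $\Tcal_{11}$ (and identically $\Tcal_{22},\Tcal_{33}$) the weight $\cos^2\theta$ (respectively $\sin^2\theta\cos^2\varphi$) replaces $4\pi$ by $4\pi/3$, yielding $\Tcal_{ii}=\tfrac{1}{3}\Tcal_{00}$. The energy current $\Tcal_{01}$ carries $k_0k_1=\lVert\pvec\rVert p_1$ against $\bar{k}_0\bar{k}_1=-\lVert\pvec\rVert p_1$, so after the hemispherical split the two terms combine into the antisymmetric difference $(\e^{\beta_R r}-1)^{-1}-(\e^{\beta_L r}-1)^{-1}$, producing $\Tcal_{01}=\frac{\pi^2}{120}(\beta_R^{-4}-\beta_L^{-4})$, nonzero precisely when $\beta_L\neq\beta_R$.

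I expect the main obstacle to be the bookkeeping at the level of the point-split expectation value: one must justify interchanging the coincidence limit with the momentum integration, and then correctly track which of $\beta_L,\beta_R$ weights each hemisphere for each of the two terms, since a sign error there would spuriously symmetrize or cancel the current $\Tcal_{01}$. Once the weights $k_\mu k_\nu$ versus $\bar{k}_\mu\bar{k}_\nu$ and the assignment of $\upbeta(\pm p_1)$ are pinned down, the remaining steps are the standard Bose integral together with elementary angular integrations.
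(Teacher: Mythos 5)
Your proposal is correct and follows essentially the same route as the paper's Appendix D: point-splitting the normal-ordered $T_{\mu\nu}$ on the smooth part $\widetilde{W}_N = W_N - W_{\mathrm{vac}}$, differentiating under the momentum integral, killing the off-diagonal components by parity in $(p_2,p_3)$, splitting the $p_1$-integral into hemispheres, and reducing everything to $\int_0^\infty r^3(\e^{\beta r}-1)^{-1}\diff r = \pi^4/(15\beta^4)$, with matching values for $\Tcal_{00}$, $\Tcal_{01}$, and $\Tcal_{ii}$. The only (harmless) stylistic difference is that you discard the trace term mode-by-mode via the nullity of the on-shell covectors, whereas the paper carries it through explicitly and lets it simplify in the coincidence limit.
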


\begin{proof}
	Let $p_0 := \lVert\pvec\rVert \equiv (\sum_{i=1}^3 p_i^2)^{1/2}$ and $\upbeta(p_1) := \Theta(p_1) \beta_L + \Theta(-p_1)\beta_R$. 
	From the definition of the (renormalized) stress-energy tensor (Eq.\ \eqref{eq:set}) it follows that
	\begin{gather*}
		\Tcal_{\mu\nu} (x) := \omega_N (T_{\mu\nu} (x)) = \left[ \left( \partial_\mu \otimes \partial_\nu - \frac{1}{2} \eta_{\mu\nu} \partial_\rho \otimes \partial^\rho \right) \widetilde{W}_N \right] (x,x)
	\end{gather*}
	for the smooth function $\widetilde{W}_N = W_N - W_{\mathrm{vac}}$ given by Eq.\ \eqref{eq:ness-two-point-reg}, where Einstein's summation convention is implied, the first (second) tensor factor applies to the first (second) argument variable $x$ ($y$), and eventually the coincidence limit $y\to x$ is taken. 
	Since $\partial_0 \otimes \partial_0 \, \e^{\pm i\lVert\pvec\rVert(x^0 - y^0)} \restr_{x=y} \, = \lVert\pvec\rVert^2$ and $\partial_i \otimes \partial_j \, \e^{i\pvec\cdot(\xvec-\yvec)} \restr_{x=y} \, = p_i p_j$ (for $i,j\in\{1,2,3\}$) we get
	\begin{align*}
		\left( \partial_\mu \otimes \partial_\mu \widetilde{W}_N  \right)(x,x) &= \frac{1}{(2\pi)^3} \int\limits_{\R^3} \frac{1}{2\lVert\pvec\rVert} p_\mu^2 \left( \frac{1}{\e^{\upbeta(-p_1)\lVert\pvec\rVert} - 1} + \frac{1}{\e^{\upbeta(p_1)\lVert\pvec\rVert} - 1} \right) \, \Diff3\pvec = \\ &= \frac{1}{(2\pi)^3} \int\limits_{\R^3} \frac{1}{2\lVert\pvec\rVert} \frac{2p_\mu^2}{\e^{\upbeta(p_1)\lVert\pvec\rVert} - 1} \, \Diff3\pvec
	\end{align*}
	for $\mu\in\{0,1,2,3\}$. 
	For the calculations we use $\int_0^\infty \frac{r^3}{\e^{\beta r} -1} \diff r = \frac{\pi^4}{15\beta^4}$ \cite[Sec.\ 6.3, (7)]{Erdelyi}. 
	Since the expressions are independent of $x\in\M$ we simply write $\Tcal_{\mu\nu}$ from now on. 
	It follows that
	\begin{align*}
		\Tcal_{00} &= \frac{1}{2} \left[ \left( \partial_0 \otimes \partial_0 + \sum\limits_{i=1}^3 \partial_i \otimes \partial_i \right) \widetilde{W}_N \right] (x,x) = \\ &= \frac{1}{2} \frac{1}{(2\pi)^3} \int\limits_{\R^3} \frac{1}{2\lVert\pvec\rVert} \frac{2p_0^2 + 2\sum_{i} p_i^2}{\e^{\upbeta(p_1)\lVert\pvec\rVert} - 1} \, \Diff3\pvec = \frac{1}{(2\pi)^3} \int\limits_{\R^3} \frac{\lVert\pvec\rVert}{\e^{\upbeta(p_1)\lVert\pvec\rVert} - 1} \Diff3\pvec = \\ &= \frac{1}{(2\pi)^3} 2\pi \int_0^\infty \frac{r}{\e^{\beta_L r} - 1} \, r^2 \diff r + \frac{1}{(2\pi)^3} 2\pi \int_0^\infty \frac{r}{\e^{\beta_R r} - 1} \, r^2 \diff r = \frac{\pi^2}{60} \left( \frac{1}{\beta_L^4} + \frac{1}{\beta_R^4} \right) \, ,
	\end{align*}
	where the integral over $p_1$ is split according to the definition of $\upbeta(p_1)$, and the resulting integrals are calculated in spherical coordinates on $(0,\infty) \times \R^2$. 
	Likewise, for $i\in\{1,2,3\}$,
	\begin{align*}
		\Tcal_{ii} &= \frac{1}{2} \left[ \left( \partial_i \otimes \partial_i + \partial_0 \otimes \partial_0 - \sum\limits_{j\neq i} \partial_j \otimes \partial_j \right) \widetilde{W}_N \right] (x,x) = \\ &= \frac{1}{2} \frac{1}{(2\pi)^3} \int\limits_{\R^3} \frac{1}{2\lVert\pvec\rVert} \frac{2p_i^2 + 2p_0^2 - 2\sum_{j\neq i} p_j^2}{\e^{\upbeta(p_1)\lVert\pvec\rVert} - 1} \, \Diff3\pvec = \frac{1}{(2\pi)^3} \int\limits_{\R^3} \frac{1}{2\lVert\pvec\rVert} \frac{2p_i^2}{\e^{\upbeta(p_1)\lVert\pvec\rVert} - 1} \, \Diff3\pvec = \\ &= \frac{1}{(2\pi)^3} \frac{2\pi}{3} \int_0^\infty r^3 \left(\frac{1}{\e^{\beta_L r} - 1} + \frac{1}{\e^{\beta_R r} - 1} \right) \, \diff r = \frac{\pi^2}{180} \left( \frac{1}{\beta_L^4} + \frac{1}{\beta_R^4} \right) \equiv \frac{1}{3} \Tcal_{00} \, .
	\end{align*}
	
	Since $\partial_0 \otimes \partial_i \, \e^{i\pvec\cdot(\xvec-\yvec)} \e^{\pm i\lVert\pvec\rVert(x^0 - y^0)} \restr_{x=y} \, = \pm \lVert\pvec\rVert p_i \equiv \pm p_0 p_i$ (for $i\in\{1,2,3\}$),
	\begin{align*}
		\Tcal_{i0} &= \Tcal_{0i} = \left( \partial_0 \otimes \partial_i \widetilde{W}_N  \right)(x,x) = \\ &= \frac{1}{(2\pi)^3} \int\limits_{\R^3} \frac{1}{2\lVert\pvec\rVert} p_0 p_i \left( \frac{1}{\e^{\upbeta(-p_1)\lVert\pvec\rVert} - 1} - \frac{1}{\e^{\upbeta(p_1)\lVert\pvec\rVert} - 1} \right) \, \Diff3\pvec \, ,
	\end{align*}
	which vanishes identically except for
	\begin{align*}
		\Tcal_{01} &= \frac{1}{(2\pi)^3} \int\limits_{\{p_1 > 0\}\times\R^2} p_1 \left(\frac{1}{\e^{\beta_R \lVert\pvec\rVert} - 1} - \frac{1}{\e^{\beta_L \lVert\pvec\rVert} - 1} \right) \Diff3\pvec = \\ &= \frac{1}{(2\pi)^3} \pi \int_0^\infty r^3 \left(\frac{1}{\e^{\beta_R r} - 1} - \frac{1}{\e^{\beta_L r} - 1} \right) \, \diff r = \frac{\pi^2}{120} \left( \frac{1}{\beta_R^4} - \frac{1}{\beta_L^4} \right) \, .
	\end{align*}
	
	Finally, for $i\in\{2,3\}$ one observes that 
	\begin{align*}
		\Tcal_{i1} &= \Tcal_{1i} = \left( \partial_1 \otimes \partial_i \widetilde{W}_N  \right)(x,x) = \\ &= \frac{1}{(2\pi)^3} \int\limits_{\R^3} \frac{1}{2\lVert\pvec\rVert} p_1 p_i \left( \frac{1}{\e^{\upbeta(-p_1)\lVert\pvec\rVert} - 1} + \frac{1}{\e^{\upbeta(p_1)\lVert\pvec\rVert} - 1} \right) \, \Diff3\pvec = 0 \, ,
	\end{align*}
	and, analogously, $\Tcal_{32} = \Tcal_{23} = \left( \partial_2 \otimes \partial_3 \widetilde{W}_N  \right)(x,x) = 0$.
\end{proof}


\mysepline

\singlespacing
\renewcommand{\refname}{\Large References}
\begin{footnotesize}
	\providecommand{\etalchar}[1]{$^{#1}$}
	\providecommand{\doi}[1]{\url{https://doi.org/#1}}
	
\end{footnotesize}


\end{document}